\newtheorem{thm}{Theorem}[section]
\newtheorem{cor}[thm]{Corollary} 
\theoremstyle{remark}
\newcommand{\beq}{\begin{equation}}
\newcommand{\eeq}{\end{equation}}
\newcommand{\be}{\begin{equation}}
\newcommand{\ee}{\end{equation}}
\newcommand{\beqa}{\begin{eqnarray}}
\newcommand{\eeqa}{\end{eqnarray}}
\newcommand{\beqar}{\begin{eqnarray*}}
\newcommand{\eeqar}{\end{eqnarray*}}
\newcommand{\bea}{\begin{eqnarray}}
\newcommand{\eea}{\end{eqnarray}}
\newcommand{\hilight}[1]{\colorbox{yellow}{#1}}
\numberwithin{equation}{section}
\numberwithin{equation}{section}
\begin{document}

\allowdisplaybreaks

\normalem

\title{Quantum Extremal Surfaces: Holographic Entanglement Entropy beyond the Classical Regime}

\author{Netta Engelhardt and Aron C. Wall
\\ 
\\ \\
  \textit{Department of Physics}\\
\textit{University of California, Santa Barbara}\\
\textit{ Santa Barbara, CA 93106, USA} \\ 
 \\ 
\small{engeln@physics.ucsb.edu, aroncwall@gmail.com}}

\date{\ }

\maketitle

\begin{abstract}
We propose that holographic entanglement entropy can be calculated at arbitrary orders in the bulk Planck constant using the concept of a ``quantum extremal surface'': a surface which extremizes the generalized entropy, i.e. the sum of area and bulk entanglement entropy. At leading order in bulk quantum corrections, our proposal agrees with the formula of Faulkner, Lewkowycz, and Maldacena, which was derived only at this order; beyond leading order corrections, the two conjectures diverge. Quantum extremal surfaces lie outside the causal domain of influence of the boundary region as well as its complement, and in some spacetimes there are barriers preventing them from entering certain regions.  We comment on the implications for bulk reconstruction.


\end{abstract}

\newpage


\tableofcontents

\baselineskip16pt

\section{Introduction}\label{intro}
The entropy of a surface is proportional to its area.  This is a mysterious truth which appears in several different contexts.

The first context in which this was noticed is black hole thermodynamics.  There, it was observed that black holes in classical general relativity obey various laws of thermodynamics so long as they are assigned an entropy proportional to the area of their event horizon: $S_H = A_H/4G\hbar$.  For example, Hawking's area law \cite{Hawking71} shows that $S_{H}$ is nondecreasing with time, in accordance with the Second Law of thermodynamics.   However, there are also quantum corrections to the area theorem---Hawking radiation implies that black holes are intrinsically quantum objects, and therefore their entropy cannot be entirely described by classical geometry.  In the semiclassical setting, the black hole entropy is given by the so-called ``generalized entropy'' of the event horizon:
\begin{equation}
S_\mathrm{gen}(H) = \frac{\langle A (H) \rangle}{4G\hbar} + S_\mathrm{out} + \mathrm{counterterms}, \label{Sgen}
\end{equation}
where $\left \langle A(H)\right\rangle$ is the expectation value of the area operator on a spatial slice $H$ of the horizon, and $S_\mathrm{out}$ is the von Neumann entropy $-\mathrm{tr}(\rho\,\ln\,\rho)$ for the state $\rho$ of matter fields outside the black hole (e.g. stars or Hawking radiation).  At leading order in $\hbar$, $S_{\text{gen}}(H)=S_{H}$. The generalized entropy obeys the Generalized Second Law (GSL), which states that $S_{\text{gen}}$ is non-decreasing with time \cite{Bekenstein73, Hawking75}. Note that $S_\mathrm{out}$ includes a divergent component due to the vacuum entanglement entropy of short-wavelength modes across the horizon. It is a standard result in quantum field theory that the leading-order divergence is in itself proportional to the area \cite{Srednicki93} and corresponds to a renormalization of Newton's constant \cite{SusskindUglum}. There are also various subleading divergences extensive on the horizon.  These divergences must be absorbed into counterterms, including various subleading quantum corrections to $S_H$ \cite{33}.  The area term is thus merely the dominant ``classical'' contribution to the entropy.

A second context in which the entropy-area relation arises is via  AdS/CFT . AdS/CFT, or gauge/gravity duality, is a correspondence between string theories in asymptotically Anti-de Sitter (AdS) bulk spacetimes and certain conformal gauge theories (CFT's) living on the conformal boundary \cite{Maldacena97}.  In the limit where the CFT is strongly coupled and has a large number $N$ of colors, the bulk theory becomes classical general relativity, coupled to certain matter fields obeying the null energy condition.

In this classical limit, there is strong evidence that the entanglement entropy of a region $R$ in the CFT can be computed from a spacelike, codimension 2 extremal surface $X$ such that $\partial X =\partial R$ and $X$ is homologous to $R$, as proposed by Hubeny, Rangamani, and Takayanagi (HRT) \cite{HRT}.   When the spacetime is static, so that there is a preferred time foliation, the extremal surface is minimal on a constant time slice \cite{RT}. The entanglement entropy of a region $R$ in the CFT is then proportional to the area of the surface in the bulk $S(R) = A(X)/4 G\hbar$.  Given some reasonable assumptions, this formula was recently proven by Lewkowycz and Maldacena (LM) \cite{LewkowyczMaldacena}.\footnote{Out of caution, LM only claim to have proven the static version of the conjecture, which involves minimal area surfaces.  To prove the argument in the non-static case, one would have to analytically continue to complex manifolds.  Although there might be subtleties in the analytic continuation, so far as we can tell, the same argument should also work for extremal surfaces in non-static but analytic spacetimes.}

In complete analogy with black hole thermodynamics, this result is only valid at $\mathcal{O}(\hbar^{-1})$, or equivalently $\mathcal{O}(1/N^{2})$ in the CFT. The first quantum corrections to this formula were computed, at order $\mathcal{O}(N^{0})$ (boundary) or $\mathcal{O}(\hbar^{0})$ (bulk), by Faulkner, Lewkowycz, and Maldacena (FLM); see also \cite{SwingleVanRaamsdonk, Barrella:2013wja}. FLM found that the entropy was given by \cite{FaulknerLewkowyczMaldacena}:
\begin{equation}\label{introFLM}
S_{R} = \frac{\langle A (X) \rangle}{4G\hbar} + S_\mathrm{ent} + \mathrm{counterterms} = S_\mathrm{gen}(X).
\end{equation}
Here $S_\mathrm{ent}$ is the \emph{bulk} entanglement entropy across the surface $X$.  (FLM assumed the geometry was static, but we will assume in what follows that everything carries over to the nonstatic case if $X$ is an extremal surface.) Since FLM restrict their attention to a context in which the total state is pure, $S_\mathrm{out}$ and $S_\mathrm{ent}$ are interchangeable.

The similarity of Eq. \ref{introFLM} with that for black hole entropy \ref{Sgen} is striking.  As in the case of a black hole, we once again find that quantum effects require us to replace the area with a generalized entropy $S_\mathrm{gen}$, now evaluated on an extremal surface $X$ rather than on a slice of an event horizon.\footnote{On a generic manifold obeying the null energy condition, an extremal surface $X$ is \emph{never} a slice of a causal horizon, since a null surface shot out from $X$ has decreasing area by the Raychaudhuri equation, while causal horizons have increasing area \cite{HawkingEllis}.}

In fact we can define $S_\mathrm{gen}$ more generally on a very broad class of surfaces.  Although it is tempting to call $S_\mathrm{gen}$ of the horizon ``\emph{the} black hole entropy'', in fact there are many possible surfaces one could choose on any given spacetime background, and these surfaces may also have a statistical interpretation \cite{Sorkin83, Jacobson95, BianchiMyers12, Balasubramanian:2013lsa, Myers:2014jia, Czech:2014wka}.  The only necessary ingredient is an entangling surface $E$, defined as any spacelike codimension 2 surface which divides a Cauchy surface $\Sigma$ into two pieces.  Let $\mathrm{Ext}(E)$ be that side of $\Sigma$ which is outside $E$, and let $\mathrm{Int}(E)$ be the side which is inside $E$.  We can then define $S_\mathrm{out}(E)$ as the entanglement entropy in the spatial region $\mathrm{Ext}(E)$, and take $S_{\text{gen}}$ as in Eq. \ref{Sgen}, although here $S_{\text{gen}}$ is evaluated on $E$ as opposed to the spatial slice of a horizon.  See Fig. \ref{EntanglingSurface}. By unitarity, any choice of $\Sigma$ passing through $E$ defines the same entropy, so $S_\mathrm{gen}(E)$ does not depend on the choice of $\Sigma$.\footnote{This statement fails in theories with a gravitational anomaly \cite{Wall:2011kb, WallIqbal}.}  

\begin{figure}[ht]
\begin{center}
\includegraphics[width=8cm]{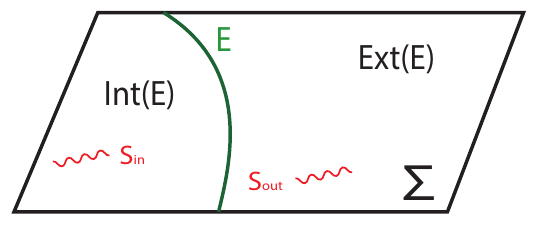} 
\caption{A surface $E$ splits an AdS-Cauchy slice $\Sigma$ into Ext$(E)$ and Int$(E)$. The generalized entropy can be defined with respect to either side, depending on whether we pick $S_{in}$ or $S_{out}$ to calculate $S_{\text{gen}}$. In the case where the state is pure, the choices yield identical results.
}
\label{EntanglingSurface}
\end{center}
\end{figure}

If the state is pure, then $S_\mathrm{out}(E)$ is equal to the entropy $S_\mathrm{in}$ of $\mathrm{Int}(E) $; if the state is mixed we must choose which side to consider.  For purposes of the FLM formula, it seems clear that one should choose the entropy $S$ which lies on the \emph{same side} as the region $R$ whose boundary entropy is of interest.  Choosing the other side would violate locality, since throwing a mixed qubit into the bulk from the complementary boundary region $\bar{R}$ would immediately affect $S(R)$, which is unphysical. Note that in an abuse of notation, we shall use the term entanglement entropy to refer to the entropy of both pure and mixed states.

Having observed that the generalized entropy can be defined for arbitrary surfaces, we now propose a modification to the FLM formula---besides extending to extremal surfaces in non-static spacetimes---namely, that instead of a) extremizing the area and then adding $S_\mathrm{out}$ as FLM did  \footnote{Note that FLM only proposed this prescription at $\mathcal{O}(\hbar^{0})$ in the entropy. However, we will refer to it as the FLM formula at all orders.}, one should instead b) extremize the total generalized entropy $S_\mathrm{gen}$.  We will call such a surface a \emph{quantum extremal surface} since it is a quantum deformation of the usual notion of an extremal surface, in which one extremizes the area.  It will be shown below that prescriptions (a) and (b) are equivalent at the order of the first quantum corrections ($\mathcal{O}(\hbar^{0})$).  Since this was the order of the FLM proof, it does not distinguish which of (a) or (b) is correct at higher order.  At higher, potentially infinite order in $\hbar$, we will argue that (a) is not invariant under boundary unitary transformations and is therefore not the entanglement entropy.  
Furthermore, we will prove several suggestive theorems about (b), which do not hold for (a).  This gives us confidence that the prescription (b) is correct, and that quantum extremal surfaces are connected to key physical properties of quantum spacetimes.

The quantum results we will present all have analogues in classical general relativity, for classical extremal surfaces.  Some of these results are needed for the consistency of the holographic entropy conjecture; others are useful for identifying the regions of spacetimes probed by the extremal surfaces.

In particular, quantum extremal surfaces, like their classical counterparts, are constrained to lie further into the bulk than the causal surface.  In the classical case, it is known that the extremal surfaces $X_R$ do not intersect the causal wedge $W_R = I^-(D_R) \cap I^+(D_R)$ associated with the domain of dependence $D_R$ of a boundary region $R$, on any spacetime obeying the null energy condition \cite{WallMaximin, HubenyRangamani}.  In fact it lies deeper into the bulk in a spacelike direction, so that it cannot even intersect the domain of influence $I_R = I^-(D_R) \cup I^+(D_R)$; the same holds for the extremal surface anchored to the complement of $R$ \cite{WallMaximin}.

We will show that these results continue to hold for quantum extremal surfaces in the perturbative quantum gravity regime, even though the null energy condition is no longer satisfied!  As a consequence, it follows that no bulk signals sent in from $D_R$ can change the spacetime behind the quantum extremal surface, so long as the bulk signals propagate locally.  This is true even if one uses time folds to create signals located outside of the causal wedge, as in the Shenker-Stanford construction \cite{ShenkerStanford}.

On the other hand, there are also bounds limiting the reach of classical extremal surfaces. For example, in \cite{EngelhardtWall}, we showed that spacelike extremal surfaces cannot propagate past any codimension 1 surface with negative extrinsic curvature (assuming that the extremal surfaces can be deformed continuously so as to lie outside). It turns out that a perturbatively quantum spacetime features analogous ``barrier surfaces'': any null surface on which the generalized entropy is non-increasing is an obstacle for quantum extremal surfaces.

These results place limits on certain methods of bulk reconstruction in AdS/CFT.  We will show that the bulk region behind a quantum extremal surface $\mathcal{X}_R$ cannot be accessed from $R$ by any of the types of boundary observables we consider, namely entanglement entropy and local causal signals.

The paper is structured as follows: we introduce necessary terminology in section \ref{Definitions} and technical assumptions on non-classical geometry as well as a key theorem from \cite{Wall12} in section \ref{Premises}. Section \ref{EE} details our prescription for computing holographic entanglement entropy, including some consistency checks and comparison to the FLM formula. In section \ref{reach}, we prove that quantum extremal surfaces lie deeper than (and spacelike to) the causal surface, and additionally act as an obstacle to causal signals.  We also ``quantize'' the theorem that the apparent hroizon always lies inside the causal horizon.  We comment on the implication for bulk signals coming in from the boundary.  Section \ref{Barriers} contains several theorems qualifying barriers to quantum extremal surfaces, and section \ref{discussion} discusses the implications for bulk reconstruction.

%
%

\section{Preliminaries}

\subsection{Definitions}\label{Definitions}

For any future-infinite timelike (or null) worldline $W^+$ (i.e. an observer), we can define a future causal horizon as the boundary of the past of $W^+$, i.e. $\partial I^{-}(W^+)= H^+$ \cite{JacobsonParentani}.  This definition is broad enough to include not only black holes, but also Rindler and de Sitter like horizons, to which the same laws of horizon thermodynamics also apply.  

The Generalized Second Law (GSL) is the statement that the generalized entropy of a causal horizon is nondecreasing in time.  More precisely, if $\Sigma$ is a Cauchy surface then $H=\Sigma \cap H^+$ is a horizon slice.  In its differential form \cite{Wall:2009wm}, the GSL says that 
\begin{equation}\label{gsl}
\frac{\delta S_\mathrm{gen}(H)}{\delta H^{a}}k^{a}\geq 0
\end{equation}
where $\delta H^{a}(p)$, $p \in H$, is a normal vector field living on the surface $H$ defining a first order variation of $H$ along its normal directions, and $k^{a}$ is a future-pointing null vector parallel to the null generators of $H^+$.  

It is also possible to extend the GSL to the case in which there is a set of future-infinite worldlines, and $H^+ = \partial I^+(\bigcup W^+)$ \cite{Wall12}.  In the context of AdS/CFT, this allows us to apply the GSL to causal horizons of boundary spacetime regions.  One can also invoke the time-reverse of the GSL, which says that $S_\mathrm{gen}$ is decreasing with time for past horizons $H^- = \partial I^+(\bigcup W^-)$.  This follows from the GSL by CPT symmetry \cite{WallANEC}.

For the purpose of defining $\delta H^{a}$ and $S_\mathrm{out}$, it is conventional to define the ``outside'' as the region in which $W^+$ lies.  In cases where the total state is mixed,  $S_\mathrm{out} \ne S_\mathrm{in}$, and in fact the GSL holds regardless of which side one evaluates the entropy on (so long as one is consistent).  That is because strong subadditivity in the form $ S(AC) + S(BC) \ge S(A) + S(B)$ tells us that
\begin{equation}
\frac{\delta S_\mathrm{in}(H)}{\delta H^{a}}k^{a} \ge \frac{\delta S_\mathrm{out}(H)}{\delta H^{a}}k^{a},
\end{equation}
choosing a complete slice of the spacetime $ABC$ such that $C$ is the infinitesimal region on $H^+$ corresponding to a null-futureward variation $\delta H$ along $H^+$, and $A$ and $B$ are inside and outside respectively \cite{WallANEC}.  See Fig. \ref{HorizonSlice}.  We will have occasion to use the GSL for both $S_\mathrm{in}$ and $S_\mathrm{out}$ below; we will write $S_\mathrm{ent}$ when we do not care which.
\begin{figure}
\begin{center}
\includegraphics[width=5cm]{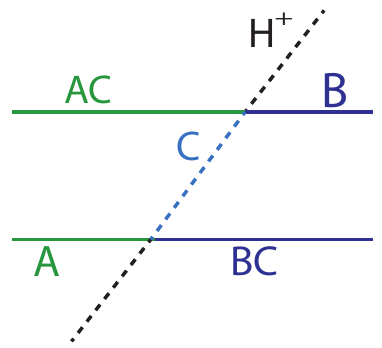} 
\caption{The dashed line represents a future causal horizon $H^{+}$, the solid green lines are regions within the horizon and the solid blue lines are outside it. $C$ is an infinitesimal region on $H^{+}$, along which the entropies $S_\mathrm{out}$ or $S_\mathrm{in}$ are evolved. Strong subadditivity says that $S(AC)-S(A)\geq S(B)-S(BC)$, so that $S_\mathrm{in}$ increases faster than $S_\mathrm{out}$.
}
\label{HorizonSlice}
\end{center}
\end{figure}

Presumably the GSL holds because of the statistics of quantum gravity microstates \cite{Sorkin83, Wall:2009wm, Sorkin:1986mg, Frolov:1993ym, Barvinsky:1994jca}.  The GSL has been proven to hold at least for free bulk fields coupled to semiclassical Einstein gravity \cite{Wall:2009wm}, although there are some mixed results in the case of higher-curvature gravity theories \cite{Sarkar:2013swa, Jacobson:1993xs, Sarkar:2010xp, Liko:2007vi}.  We will assume that the GSL holds in any UV-complete theory of quantum gravity, and will invoke it below even in contexts that go beyond the proof in \cite{Wall:2009wm}.

We may use $S_\mathrm{gen}$ to generalize constructs from classical general relativity to perturbative quantum gravity.  Recall that the classical definition of a (marginally) trapped surface $T$ is a codimension 2 spacelike surface such that the null expansion $\theta$ of future-outward-pointing geodesic congruences from $T$ is negative: $\theta^{+}\leq 0$, with equality for marginally trapped surfaces.  Recall that $\theta = \frac{1}{A}\frac{dA}{d\lambda}$, where $\lambda$ is an affine parameter along the null congruence.\footnote{For purposes of singularity theorems, it is also necessary to assume that the trapped surface is compact, but for purposes of AdS/CFT it is also interesting to consider ``trapped'' surfaces anchored to the boundary.}  Classical trapped surfaces are therefore surfaces whose area decreases along the null congruence: $\frac{dA}{d\lambda}\leq0$.  These conditions can be rephrased as:
\begin{equation} \frac{\delta A}{\delta T^{a}}k^{a}\leq 0\end{equation}
where $\delta T^{a}$ is an infinitesimal variation normal to $T$ and $k^{a}$ is a future-pointing null generator of a geodesic congruence on $T$. Similarly, a classical codimension 2 spacelike extremal surface $X$ is defined to be marginally trapped in both the past- and future- directions, i.e. $\theta^{\pm}=0$, or equivalently:
\begin{equation} \frac{\delta A}{\delta X^{a}}=0\end{equation}
where $\delta X^{a}$ is again an infinitesimal variation normal to $X$. 

While the mathematical notions of extremal and (marginally) trapped surfaces remain well-defined in semiclassical geometries (and presumably even perturbative quantum gravity) they fail to capture key quantum properties of such spacetimes. To define constructs that respect the new, non-classical structure, we must define them in terms of a quantity that is sensitive to quantum fields propagating on the spacetime, but reduces to the area $A$ in the absence of such effects. A natural candidate for this quantity is the generalized entropy, which was previously used in \cite{Wall12} to define a notion of quantum trapped surfaces. We will follow in the same vein and extend the definition in \cite{Wall12} to marginally trapped and extremal surfaces. 

Let $\mathcal{T}$ be a codimension 2 spacelike surface on a Cauchy surface\footnote{Technically AdS space is not globally hyperbolic due to the existence of a timelike boundary at infinity, but its causal properties are still fine assuming that there are boundary conditions at infinity.  We can define a spacelike slice $\Sigma$ in the spacetime $\mathcal{M}$ to be an AdS-Cauchy surface if $\Sigma$ is a Cauchy surface in $\mathcal{M}$ once $\partial \mathcal{M}$ has been conformally compactified. We further define a spacetime to be AdS-hyperbolic, in keeping with \cite{WallMaximin}, if (1) it has no closed causal curves, and (2) for any two points $x$ and $y$ in $\mathcal{M}$, $J^{+}(x)\cap J^{-}(y)$ is compact after conformal compactification of the AdS boundary. These two conditions are equivalent to requiring the existence of an AdS-Cauchy surface. The definitions above allow us to assume a reasonable causal structure without relinquishing relevance to AdS/CFT. } $\Sigma$, and let $\delta \mathcal{T}^{a}$ be as above. If, for any future-directed generator $k^{a}$ of a null congruence on $\mathcal{T}$
\begin{equation} \frac{\delta S_\mathrm{gen}}{\delta \mathcal{T}^{a}}k^{a}<0 \label{quantumtrapped} \end{equation}
then $\mathcal{T}$ is a quantum trapped surface. This definition was first introduced in \cite{Wall12}. A surface $\mathcal{T}$ is therefore quantum trapped if its generalized entropy decreases when it is evolved forwards in time along any future-directed null congruence on $\mathcal{T}$. Since in the classical limit, $S_\mathrm{gen}\propto A$, we recover in this regime the definition of a classical trapped surface. If the inequality sign in Eq. \ref{quantumtrapped} is replaced by an equality, we obtain a quantum marginally trapped surface.
\begin{center}
\begin{figure}[t]
\centering

\subfigure[ ]{
\centering
\includegraphics[width=0.3 \textwidth]{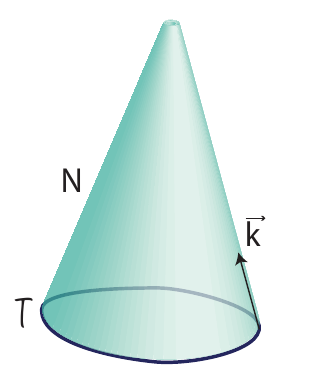}
}
\qquad
\subfigure[]{
\centering \includegraphics[width=0.3\textwidth]{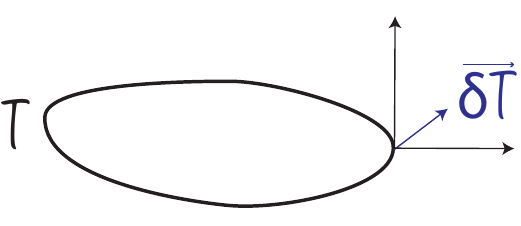}
}
\caption{ (a) {\small $\mathcal{T}$ is a quantum trapped surface with infinitesimal normal $\vec{\delta \mathcal{T}}$, and $N$ is the null surface generated from null congruences shot from $\mathcal{T}$. We could consider variations of $S_{\mathrm{gen}}$ in any direction $\delta \vec{T}$, but we take them only in the direction $\vec{k}$, which is a null generator of $N$. $\mathcal{T}$ is quantum trapped, so we know that the generalized entropy on $\mathcal{T}$ along $\vec{k}$ must be strictly decreasing.} (b) There are two possible directions in which one could deform a codimension 2 surface. $\delta T^{a}$ is defined as any combination of the two.}
\label{Thm3}
\end{figure}
\end{center}

The extension to quantum extremal surfaces follows naturally---just as a classical extremal surface is marginally trapped in both past- and future- directions, a quantum extremal surface $\mathcal{X}$ is quantum marginally trapped in both directions:
\begin{equation} \frac{\delta S_\mathrm{gen}}{\delta \mathcal{X}^{a}}=0\end{equation}
which again reduces to the classical definition in the $\hbar \to 0$ limit.

It is worth pausing at this point to address the utility of the definitions above. Black hole thermodynamics provides a tantalizing hint at a connection between thermodynamics and quantum gravity. The GSL has already been used by one of us to prove the existence of geodesically-incomplete quantum geometries in the presence of quantum trapped surfaces \cite{Wall12}.  A definition of quantum extremal surfaces should prove similarly useful, in particular in the context of AdS/CFT and holographic entanglement entropy in non-classical bulk spacetimes.\footnote{Unlike Ref. \cite{Wall12}, our applications do not involve strong gravity regions near singularities, making our assumption of the GSL more plausible.}

\subsection{Premises about Quantum Geometry}\label{Premises}

There are subtleties in the definition of quantum extremal surfaces that do not arise for extremal surfaces in classical geometry. Such subtleties stem from properties of quantum gravity away from the $\hbar=0$ limit. Below we clarify some of these issues in an attempt to make quantum extremal surfaces well-defined at any order in $\hbar$. Readers who are not concerned with the technical discussion which follows below may wish to skip directly to the statement of Theorem \ref{Wall12Thm1}, which is crucial to the rest of this paper.

We begin by qualifying the regime of validity of the results in this paper.  We work in perturbative quantum gravity via an $\hbar$ \emph{expansion}. A spacetime will be said to admit an $\hbar$ expansion if its metric can be described by an expansion in finite order of $\hbar G/l^{2}$, where $l$ is some length scale of the quantum fields in the theory:
\begin{equation}
g_{ab} = g_{ab}^{(0)} + g_{ab}^{(1/2)} + g_{ab}^{(1)} + g_{ab}^{(3/2)} + \ldots
\end{equation}
where the superscripts represent orders of $\hbar$ after setting $G,\,l = 1$ (the fractional orders arise because quantized gravitons have amplitude $\sqrt{\hbar}$).  If we expand about some specific classical background, then the first term has no fluctuations but the others all do.  The terms in this expansion must therefore be regarded as operators.  (In the semiclassical approximation, one only considers expectation values of the $g_{ab}^{(1)}$ term; if one also ignores graviton fluctuations (as done by e.g. FLM \cite{FaulknerLewkowyczMaldacena}) then this is the first quantum correction to the metric, resulting in an $\mathcal{O}(\hbar^{0})$ contribution to the generalized entropy.)

The terms in the $\hbar$ expansion can be calculated by iterative quantization and backreaction of all fields, including gravitons.  We may obtain an effective field theory at this level by introducing a UV cutoff at energies much smaller than the Planck scale.  We assume without proof that this effective field theory can be consistently defined, and that the GSL holds in it (as stated above).  


The generalized entropy receives radiative corrections due to loop divergences.  These divergences must be absorbed into counterterms, which are of subleading order in $\hbar$ compared to the classical area term.  $S_{\text{ent}}$ is likewise subleading in $\hbar$ since its leading order contribution is at $\hbar^0$.  We can also include $\alpha'$ corrections in a similar expansion to describe a perturbatively stringy spacetime.  This would produce additional subleading higher-curvature corrections to $S_{\text{gen}}$,  which can be calculated for actions which are arbitrary functions of the Riemann tensor by using the Dong entropy formula \cite{Dong}\footnote{The entropy formulae in more restricted cases are given by \cite{Sarkar:2013swa, Fursaev:2013fta, Camps:2013zua, Bhattacharyya:2013gra, Bhattacharyya:2014yga}.}.  Either way, the counterterms are subleading with respect to some parameter.  This allows the counterterms to be consistently neglected in the results that follow.  When proving inequalities in a regime with a small expansion parameter, it is sufficient to prove the result at the first  order at which it is not saturated, since that dominates over all higher terms.  In the context of our proofs, it can be shown \cite{Wall12} that whenever the counterterms would be important, the Bekenstein-Hawking term is always more important, due to being lower order.  The same does not necessarily hold for $S_\mathrm{ent}$, because it is possible to find situations where an order $\hbar$ correction to the area is balanced against the entanglement entropy.

In order to define the notion of a quantum extremal surface, we need to face the unpleasant fact that quantum fluctuations of a spacetime can involve superpositions of different geometries.  As with any gauge theory, this leads to a drastic enhancement of the diffeomorphism group, because each term in the quantum superposition can be separately coordinatized.  For example, in a quantum superposition of a black hole and a neutron star, we could define the ``r'' coordinate independently for each classical metric.  Even if we fix the classical background metric (and we need not!), the issue of quantum superpositions arises at the next order in $\hbar$.  

Often this problem is dealt with by gauge fixing, but fixing the gauge introduces Faddeev-Popov ghosts with negative norm, which is problematic as it introduces negative contributions to the entanglement entropy, violating quantum inequalities.  So it is better to define the quantum extremal surface ${\cal X}$ in a gauge-independent (i.e. covariant) way.  We expect that this can be done in a precise manner, but in this article our main concern is not with quantum superpositions, but rather with the effects of deforming the definition of an extremal surface by adding $S_\mathrm{ent}$.  Thus we merely \emph{outline} a possible approach:

It is instructive to start with the simpler case in which we extremize just the area.  In this case we should promote the area from an expectation value to an operator $\hat{A}$.  (We can likewise promote the counterterms to operators.)  The surface may then be found by demanding that the first order variation of the area vanish as an eigenvalue equation:
\begin{equation}\label{eigen}
\frac{\delta \hat{A}}{\delta X^a} \rho  = 0 = \rho \frac{\delta \hat{A}}{\delta X^a},
\end{equation}
where $\rho$ is a state in a joint Hilbert space
\begin{equation}
\mathcal{H} = \mathcal{H}_\mathrm{bulk}\otimes\mathcal{H}_\mathrm{surfaces},
\end{equation}
where the first factor contains the bulk field theory degrees of freedom, and the second contains the degrees of freedom for (general linear superpositions of) possible locations of the codimension 2 surface.  Since the surface is purely a theoretical construct and not an actual physical quantity, we expect that there will also be an operator $\Omega: \mathcal{H}_\mathrm{surfaces} \to \mathbb{C}$ which we can use to reduce states of $\mathcal{H}$ to states of $\mathcal{H}_\mathrm{bulk}$, the actual physical degrees of freedom.  This ensures that $\rho$ corresponds to the same state of the bulk felds that we started with (the one dual to the chosen CFT state).

In the case of the quantum extremal surface, we cannot simply promote $S_{ent}$ to an operator, since it is not linear in the density matrix $\rho$.  Fortunately, for purposes of defining the extremal surface, we are only interested in the first order variations of $S_{ent}$ with respect to $\delta X^a$.  By definition this first order variation is linear in the perturbations $\delta \rho$, so it corresponds to some linear operator, which could then be inserted into Eq. (\ref{eigen}). 

Having defined the quantum extremal surface $\mathcal{X}$, we can then evaluate $S_\mathrm{gen}(\mathcal{X})$; at this step we only need the expectation values: $\langle A \rangle + S_\mathrm{ent} + \langle \mathrm{counterterms} \rangle$.  This approach requires further investigation, but for now we choose to work under the assumption that it can be made precise in the $\hbar$ expansion.  Hence we will assume that there always exists a gauge-frame in which $\mathcal{X}$, and any other surfaces of interest (e.g. horizons), have a sharp location, and that the classical geometrical relations between such surfaces continue to hold as operator relations in the perturbative regime.

We will now quote a theorem from \cite{Wall12} which we will use throughout this paper to prove a number of results about quantum extremal surfaces. This theorem assumes an $\hbar$ expansion (which is needed to consistently neglect the higher-curvature counterterms, and also to justify a quantum inequality applied to $S_\mathrm{ent}$):

\begin{thm}\label{Wall12Thm1}(from \cite{Wall12}): Let $M$, $N$ be null splitting surfaces (i.e codimension 1 surfaces which divide spacetime into two regions \cite{EngelhardtWall} and have an open exterior) which coincide at a point $p$ and let $\Sigma$ be a spacelike slice that goes through $p$. If (1) $M\cap \text{Ext}(N)=\emptyset$, and (2) $M$, $N$ are smooth at the classical order near $p$, and the spacetime is described by an $\hbar$ expansion there, then there exists a way of evolving $\Sigma$ forward in time in a neighborhood of $p$ so that 
\begin{equation} 
\Delta S_\mathrm{gen}\left (M\right)\geq \Delta S_\mathrm{gen}\left(N\right)
\end{equation}
 with equality only if $M$ and $N$ coincide at a neighborhood.
\end{thm}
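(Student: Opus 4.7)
The plan is to construct a specific infinitesimal forward-time evolution $\Sigma \to \Sigma'$ adapted to the geometry at $p$, and then to compare $\Delta S_\mathrm{gen}(M)$ with $\Delta S_\mathrm{gen}(N)$ term by term, using a classical comparison principle for the area contribution and strong subadditivity together with unitary bulk evolution for the entanglement contribution. The hypothesis $M \cap \mathrm{Ext}(N) = \emptyset$ together with coincidence at $p$ forces $M$ to lie in $\overline{\mathrm{Int}(N)}$, and smoothness at classical order at $p$ then forces the tangent hyperplanes of $M$ and $N$ to agree there. In particular $M$ and $N$ share a common future-directed null generator direction $k^a$ at $p$.

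I would take $\Sigma \to \Sigma'$ to coincide with $\Sigma$ outside a small neighborhood $U$ of $p$, and within $U$ to move forward along $k^a$ by an infinitesimal smooth bump of compact support. To leading order this slides the cross-sections $M \cap \Sigma$ and $N \cap \Sigma$ along their respective null generators within $U$, so that the classical area changes are governed by the null expansions $\theta_M$ and $\theta_N$ integrated against the bump profile. Because $M$ lies inside $N$ with tangency at $p$, a classical comparison principle — analogous to the mean-curvature comparison for tangent spacelike codimension-$2$ surfaces, one contained in the other — forces a pointwise inequality between $\theta_M$ and $\theta_N$ at $p$ with the correct sign, yielding $\Delta \langle A(M)\rangle \geq \Delta \langle A(N)\rangle$ at leading order in $\hbar$, with strict inequality in the generic case where $M$ and $N$ are not tangent to all orders at $p$.

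For the bulk entanglement contribution, the crucial fact is that $\mathrm{Ext}(N) \subset \mathrm{Ext}(M)$ in a neighborhood of $p$. Let $A = \mathrm{Ext}(N)\cap \Sigma$, $B = (\mathrm{Ext}(M)\setminus \overline{\mathrm{Ext}(N)})\cap \Sigma$ the sliver between $M$ and $N$, and $C$ the infinitesimal spacelike region swept out by the forward evolution in $U$. Using unitarity of bulk time evolution to relate density matrices on $\Sigma$ and $\Sigma'$ outside $C$, the difference $\Delta S_\mathrm{ent}(M) - \Delta S_\mathrm{ent}(N)$ reduces to a conditional mutual information of the form $I(B : C \,|\, A) \geq 0$, which is precisely the statement of strong subadditivity. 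This is the quantum analogue of the classical null-expansion comparison of the previous step, and is structurally the same strong-subadditivity move used in the original proof of the GSL.

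Summing the two contributions gives $\Delta S_\mathrm{gen}(M) \geq \Delta S_\mathrm{gen}(N)$. The $\hbar$ expansion guaranteed by the hypothesis ensures that higher-curvature counterterms and radiative corrections cannot reverse the inequality at whichever order it first becomes nontrivial, as discussed in Section \ref{Premises}. Equality forces simultaneous saturation of the classical comparison principle and of strong subadditivity, which together demand that $M$ and $N$ agree in a full neighborhood of $p$. The hardest step will be controlling the strong-subadditivity argument in the quantum regime: one must define the bulk quantum state on regions with operator-valued boundaries in a controlled way within the effective field theory, and then verify that the entropy comparison survives uniformly in the $\hbar$ expansion, neither being swamped by the fluctuating-area contributions nor being undone by the subleading counterterms.
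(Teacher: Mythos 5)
The paper does not actually prove this statement: Theorem \ref{Wall12Thm1} is quoted verbatim from \cite{Wall12} and used as an external premise, so there is no internal proof to compare against. That said, your reconstruction follows the strategy of the proof in that reference: the containment $M\subseteq\overline{\mathrm{Int}(N)}$ plus tangency at $p$ (forced by coincidence of two null surfaces with $M\cap\mathrm{Ext}(N)=\emptyset$) gives the pointwise ordering of the classical null expansions at $p$; strong subadditivity in the conditional-mutual-information form $I(B:C\,|\,A)\geq 0$ handles the entanglement term; and the $\hbar$ expansion keeps the counterterms subleading at whichever order the inequality is first unsaturated. That is the right skeleton, and your worked example of the sign (a null cone tangent to a null plane from the inside has larger expansion) would be worth including.

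The one step you should not wave at is the rigidity clause (``equality only if $M$ and $N$ coincide in a neighborhood''). Saturation of the expansion comparison and of strong subadditivity for a single infinitesimal evolution at $p$ only constrains the surfaces at $p$ to low order in a Taylor expansion; it does not by itself force coincidence on an open set. The argument in \cite{Wall12} closes this by observing that if the surfaces separate at some finite order about $p$ (which the classical-order smoothness hypothesis makes meaningful), one can choose the supported forward evolution of $\Sigma$ so that the first non-saturated order dominates all higher ones, making the inequality strict; only agreement to all orders, hence on a neighborhood, permits equality. Relatedly, your regions $A$, $B$, $C$ are cut out by $M\cap\Sigma$ and $N\cap\Sigma$, but the sliver $C$ swept out by the evolution differs for $M$ and for $N$ away from $p$; this is controlled only because the bump is supported in an arbitrarily small neighborhood of the tangency point where the two cross-sections agree to the relevant order, and the proof should say so explicitly.
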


\noindent In particular, there exists a normal vector $\delta \Sigma^{a}$ to $\Sigma\cap M$ such that 
\begin{equation} \frac{\delta S_\mathrm{gen}\left(M\right)}{\delta \Sigma^{a}}k^{a} - \frac{\delta S_\mathrm{gen}\left(N\right)}{\delta \Sigma^{a}}k^{a}\geq 0 \label{Diff}\end{equation}
where $k^{a}$ is the null normal to $M$ and $N$ at $p$. Note that 
we use $S_\mathrm{gen}(M)$ to denote the generalized entropy of a spatial slice $M\cap \Sigma$  of $M$. 

\section{A Holographic Entanglement Entropy Proposal} \label{EE}

We will now detail our proposal for computation of entanglement entropy via holography at any order in $\hbar$. We begin by briefly reminding the reader of some recent holographic entropy proposals. 

HRT proposed that the holographic entanglement entropy $S(R)$ is proportional to the area of a classical extremal surface $X_R$ anchored to $\partial R$ and homologous to $R$.  (If there are multiple such extremal surfaces, one must choose the one with the least area.)  When $R$ has a boundary, $X_R$ is noncompact, and $A(X_R)$ is IR divergent.  This bulk IR divergence is dual to the UV divergence of $S(R)$ in the boundary CFT.  In order to test the conjecture, one must compare universal aspects of the divergent entropies (those that do not depend on the choice of regulator).

The FLM formula (\ref{introFLM}) generalizes the classical prescription by considering quantum corrections to the Euclidean gravitational path integral.  FLM showed that the entanglement entropy of a region $R$ on the boundary is given by the generalized entropy of the classical extremal surface, at least at $\mathcal{O}(\hbar^{0})$ and for static situations.

We expect that at higher orders in $\hbar$, the entanglement entropy is given by the generalized entropy of a surface whose location is sensitive to the quantum corrections to the entropy.  Our ideal candidate for a formula for holographic entanglement entropy in a non-classical bulk should therefore (1) be sensitive to quantum effects at any order in $\hbar$, (2) reproduce the classical HRT formula at $\mathcal{O}(\hbar^{-1})$, and (3) reproduce the result of FLM at $\mathcal{O}(\hbar^{0})$. We will show below that the generalized entropy of a quantum extremal surface obeys all three requirements. We propose the following conjecture:\\
\textbf{Conjecture:} The entanglement entropy of a region $R$ in a field theory with a holographic dual is given at any order in $\hbar$ in the holographic dual by the generalized entropy of the quantum extremal surface $\mathcal{X}_{R}$ anchored at $R$ and homologous to $R$:
\begin{equation}
S_{R} = S_\mathrm{gen}\left(\mathcal{X}_{R}\right)\label{Ent}
\end{equation}
If there are multiple such quantum extremal surfaces, then we propose that the one with least $S_\mathrm{gen}$ should be selected, at least when their difference is large: $\Delta S_\mathrm{gen} \gg 1\,\mathrm{bit}$.\footnote{Presumably the surface with least $S_\mathrm{gen}$ dominates in an FLM-like gravitational path integral calculation of $S(R)$ \cite{FaulknerLewkowyczMaldacena}, although if two entropies differ only by an order unity number of bits, then there might be comparable contributions coming from each extremal surface.  The holographic entanglement community is currently puzzled about what the LM argument says when there are multiple extremal surfaces, due to a perplexing argument by Myers \cite{Rob, Fischetti:2014zja, Matt} that found the \emph{average} of the areas.  His argument takes the number of replicas $n \to 1$ first, before taking Newton's constant $G \to 0$.  If one takes the other order of limits, one gets the minimum area surface as expected.}
A noncompact quantum extremal surface also has IR divergences, not only in the area term, but also in $S_\mathrm{out}$ (not to mention the counterterms).  Since the presence of IR divergences is not a new feature of quantum extremal surfaces, we will not worry about it here, but will assume that these divergences are cut off in a suitable manner.

In cases where the total state of the system is mixed, it is necessary to choose one side of ${\cal X}$ to evaluate the entropy on: we may use either $S_\mathrm{out}$ to calculate $S(R)$, or $S_\mathrm{in}$ to calculate $S(\bar{R})$.  In our proposal these correspond to two different bulk surfaces ${\cal X}_\mathrm{out}$ and ${\cal X}_\mathrm{in}$.  This fact that there are now two surfaces is reminiscent of the case of black holes, where there is also a mixed boundary CFT and two extremal surfaces in the bulk, but for a different reason (the homology constraint).

\subsection{Comparison to Faulkner--Lewkowycz--Maldacena Formula}\label{Comparison}

We will now show that our proposal reproduces the FLM result in the regime in which FLM showed it.  Given a region $R$ of a CFT, we can find either the classical extremal surface $X_R$ anchored to $\partial R$, or the quantum extremal surface $\mathcal{X}_{R}$.  Once we include quantum effects, these are no longer the same surface.  The leading order quantum correction to the entropy however, is the same.  In the absence of graviton fluctuations, as in the derivation of FLM:
\begin{equation}\label{agree}
S_\mathrm{gen}(X_{R}) = S_\mathrm{gen}(\mathcal{X}_{R}) + \mathcal{O}(\hbar^{1})
\end{equation}
In a semiclassical spacetime, we expect that the classical extremal surface $X_{R}$ and the quantum extremal surface $\mathcal{X}_{R}$ are a (proper) distance of order $\hbar$ apart.  The difference in $S_\mathrm{ent}(X_{R})$ and $S_\mathrm{ent}(\mathcal{X}_{R})$ is therefore of order $\mathcal{O}(\hbar)$ and can be neglected at this order.  It remains to show that the areas agree at order $\hbar$.  $X_{R}$ and $\mathcal{X}_{R}$ are a distance $\hbar$ apart, but first order variations away from the classical extremal surface $X_R$ do not change the area.  Hence $A(X_{R}) - A(\mathcal{X}_{R})=\mathcal{O}(\hbar^2)$.

The proposals $S_\mathrm{gen}(X_{R})$ and $S_\mathrm{gen}(\mathcal{X}_{R})$ will not agree at higher order.  Calculations at higher order in $\hbar$ (perhaps using the methods of \cite{RosenhausSmolkin14}) could in principle determine which of the two approaches gives the correct formula for the holographic entanglement entropy.   Because it is easier to prove fruitful theorems about the quantum extremal surface $\mathcal{X}_{R}$, we expect that our prescription is more likely to be correct.

For example, we will show in section \ref{reach} that the quantum surface $\mathcal{X}_{R}$ always lies outside the domain of influence of the causal wedge $W_R$.  This is an important consistency relation, since the entropy $S_\mathrm{gen}(\mathcal{X}_{R})$ must be invariant under all unitary transformations of the boundary region $R$.  The classical extremal surface $X_{R}$ does not obey this condition, and we will argue that its entropy can therefore be influenced by unitary operations in $R$ (although there is a sense in which this effect goes beyond the $\hbar$ expansion).

The surface $\mathcal{X}_{R}$ also seems to be more likely to be able to handle radiative corrections.  When the gravitational action is no longer Einstein-Hilbert, it is known that one must extremize, not the area, but a corrected entropy functional \cite{Sarkar:2013swa, Dong, Fursaev:2013fta, Camps:2013zua, Bhattacharyya:2013gra, Bhattacharyya:2014yga, Hung:2011xb, deBoer:2011wk}. It is possible to obtain these corrections from divergences in the entanglement entropy \cite{ 33}.  The choice of UV cutoff determines which contributions are considered ``entanglement'' and which are ``higher curvature corrections''.   An RG flow of the cutoff towards in the infrared reassigns entropy from the former category to the latter.  Consistency under the RG flow therefore requires that these two types of modifications to the holographic entropy must be treated in the same way.

It is not clear whether the FLM result can be extended to include quantized bulk gravitons.  In this case it is possible that the prescriptions for entropy will not agree even at $\mathcal{O}(\hbar^{0})$.\footnote{There is no correction to $S_\mathrm{gen}$ at the intermediate order $\mathcal{O}(\hbar^{-1/2})$.  At this order, the graviton field is linear, so that there is a symmetry $g_{ab}^{(1/2)} \to - g_{ab}^{(1/2)}$.  In any quantum state which preserves this symmetry (e.g. a Hartle-Hawking state), there will be a vanishing expectation value $\langle g_{ab} \rangle = 0 + \mathcal{O}(\hbar^1)$, where the correction is due to nonlinear effects.} Any differences between the two prescriptions would arise due to $X_R$ and ${\cal X}_R$ being at different locations.  Gravitons might produce an order $\hbar^{1/2}$ separation between the two surfaces $X$ and ${\cal X}$.  This does not lead to a discrepancy at $\mathcal{O}(\hbar^{-1/2})$ since $X$ is extremal.  But it might lead to a discrepancy at $\mathcal{O}(\hbar^{0})$. It would be interesting to check this with more explicit calculations.

\section{Quantum Extremal Surfaces lie deeper than Causal Surfaces}\label{reach}

We will now show that the quantum extremal surface cannot intersect the causal wedge.  In fact, it must be spacelike to it, lying deeper in the bulk.

First we review the classical situation.  If the bulk obeys the null energy condition, it is known that the extremal surface $X_R$ for a boundary region $R$ cannot intersect the bulk causal wedge $W_R$, defined as the intersection of the past and future of $R$:
\begin{equation} 
W_R = I^{-}\left (D_{R}\right) \cap I^{+}\left(D_{R}\right). 
\end{equation}
The boundary of this wedge is called the causal surface (see Fig. \ref{CS}):
\begin{equation} 
C_R = \partial I^{-} \left(D_{R} \right) \cap \partial I^{+}\left(D_{R}\right) = \partial_0 W_R.
\end{equation}
Assuming the null energy and generic conditions, $X_R$ lies farther from the boundary than $C_R$, and is spacelike to it \cite{Wall12}.\footnote{Nongenerically, it is possible for $X_R$ and $C_R$ to be null separated or to coincide.}  
In other words, $X_R$ also cannot lie anywhere inside the domain of influence $I_R$, defined as
\begin{equation}
I_R = I^+(D_R) \cup I^-(D_R).
\end{equation}
(The weaker result that $X_R$ does not lie in the wedge $W_R$, was proven in \cite{HubenyRangamani}.)  In the case where $R$ is taken to be an entire asymptotic boundary, this result reduces to the classic theorem that trapped surfaces are enclosed by event horizons \cite{HawkingEllis,Wald}

\begin{figure}[ht]
\begin{center}
\includegraphics[width=5cm]{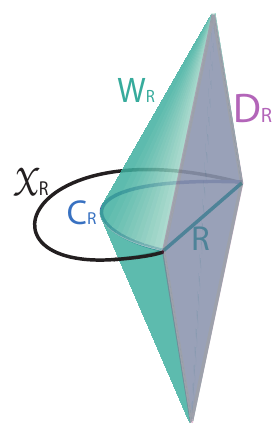} 
\caption{{\small The causal surface $C_R$ is the spacelike boundary of the causal wedge (depicted in green) associated with $R$. The domain of dependence $D_{R}$ of $R$ is depicted in purple. The quantum extremal surface $\mathcal{X}_{R}$ does not intersect the causal wedge.}}
\label{CS}
\end{center}
\end{figure}

This has some important implications for bulk reconstruction.  Using the classical equations of motion, it is possible (at least perturbatively in the bulk coupling constants) to reconstruct the bulk state in $W_R$ from the bulk fields near $D_R$ \cite{HubenyRangamani, 59, Bousso:2012sj}.  More generally, one could measure the bulk fields in $W_R$ by sending in bulk observers who begin and end within $D_R$.

However, since the extremal surface lies outside of $C_R$, it seems that one can in fact obtain some information about the spacetime farther from the boundary than $C_R$ \cite{HubenyRangamani,Czech:2012bh}; plausibly \emph{all} information up to $X_R$ can be reconstructed \cite{Wall12}.

When the bulk becomes quantum, it no longer satisfies the null energy condition.  This implies that $X_R$ could be closer to the boundary than $C_R$, or else timelike separated from it.

This implies that one cannot always reconstruct the bulk up to $X_R$.  For suppose we have a pure state, so that $X_R = X_{\bar R}$, and suppose that ${\bar R}$ can be used to reconstruct anything up to $X_{\bar R}$ on the ${\bar R}$ side.  If $X_R$ were to lie in $I^+(D_R)$, then it would be possible to send a signal from $D_R$ to the reconstructed region, which would imply that observables in $R$ and ${\bar R}$ do not commute, violating microcausality of the CFT.  The same argument in time reverse shows that $X_R$ could not lie in $I^-(D_R)$.  Thus the reconstruction hypothesis is consistent only if the extremal surface is spacelike to $C_R$ and deeper into the bulk.

The solution is to use the \emph{quantum} extremal surface $\mathcal{X}_{R}$ instead.  This surface is deeper than $C_R$ and spacelike to it, as we will now show.  We can no longer use the null energy condition, so we instead prove our result using the GSL (\ref{gsl}), which we assume holds for bulk spacetimes with arbitrary quantum corrections.

\begin{thm} A quantum extremal surface $\mathcal{X}_{R}$ can never intersect $W_R$. 
The surface $\mathcal{X}_{R}$ is moreover generically spacelike separated from the causal surface $C_{R}$, but might be null separated from or coincide with $C_R$ in non-generic spacetimes.
\label{causal}\end{thm}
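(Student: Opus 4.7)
The plan is to argue by contradiction, adapting the classical NEC-based proof to the perturbatively quantum regime by substituting the GSL for the null energy condition. The formal tool that makes this substitution possible is Theorem~\ref{Wall12Thm1}, which converts a tangency between two null splitting surfaces into a comparison of first-order variations of $S_\mathrm{gen}$ in the shared null direction, and so plays the role of the classical ``slower-focusing null surface is a barrier'' statement that drives the NEC argument in \cite{Wall12,WallMaximin}.

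Suppose for contradiction that $\mathcal{X}_R$ fails to be spacelike to and deeper than $C_R$; by CPT it suffices to treat the case $\mathcal{X}_R\cap\overline{I^-(D_R)}\neq\emptyset$. A continuity argument --- e.g.\ shrinking $R$ through a one-parameter family and following $\mathcal{X}_R$, or equivalently propagating the null generators of $\partial I^-(D_R)$ inward from $\partial D_R$ until they first graze $\mathcal{X}_R$ --- produces a bulk point $p$ at which $\mathcal{X}_R$ is tangent to the past causal horizon $H:=\partial I^-(D_R)$ from inside, i.e.\ $\mathcal{X}_R$ lies locally in $\overline{I^-(D_R)}$. Let $k^a$ be the future-directed null generator of $H$ at $p$ (pointing toward $D_R$) and let $\tilde N$ be the null splitting surface generated by shooting null geodesics from $\mathcal{X}_R$ in the $k^a$ direction. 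Then $\tilde N$ and $H$ coincide at $p$ to first order, and locally $\tilde N\cap\text{Ext}(H)=\emptyset$, so the hypotheses of Theorem~\ref{Wall12Thm1} hold with $M=\tilde N$ and $N=H$.

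Three facts now combine at $p$: (i) Theorem~\ref{Wall12Thm1} yields $\frac{\delta S_\mathrm{gen}(\tilde N)}{\delta\Sigma^a}k^a\geq\frac{\delta S_\mathrm{gen}(H)}{\delta\Sigma^a}k^a$; (ii) the GSL applied to the causal horizon $H$ of $D_R$, with $k^a$ future-directed toward $D_R$, gives $\frac{\delta S_\mathrm{gen}(H)}{\delta\Sigma^a}k^a\geq 0$; (iii) quantum extremality of $\mathcal{X}_R$ forces $\frac{\delta S_\mathrm{gen}(\tilde N)}{\delta\Sigma^a}k^a=0$ at $p$, since this is exactly the first-order variation of $S_\mathrm{gen}$ at $\mathcal{X}_R$ in the normal direction $k^a$. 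Chaining these yields $0\geq\frac{\delta S_\mathrm{gen}(H)}{\delta\Sigma^a}k^a\geq 0$, saturating both the GSL and Theorem~\ref{Wall12Thm1}. The equality clause of Theorem~\ref{Wall12Thm1} then forces $\tilde N$ and $H$ to agree in a neighborhood of $p$, so $\mathcal{X}_R$ must coincide locally with a cross-section of the causal horizon, i.e.\ with (a piece of) $C_R$. This is precisely the non-generic case allowed by the theorem statement; generically, no such exact coincidence of a null surface generated from $\mathcal{X}_R$ with a causal horizon of $D_R$ occurs, so the starting assumption fails and $\mathcal{X}_R$ lies strictly outside $\overline{I^-(D_R)}$. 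The time-reverse of the same argument places $\mathcal{X}_R$ strictly outside $\overline{I^+(D_R)}$, hence spacelike to $C_R$ and deeper into the bulk, as claimed. In particular $\mathcal{X}_R\cap W_R=\emptyset$ in all cases.

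The hardest step will be rigorously establishing the first-touch tangency at $p$: the quantum extremal surface is defined by the operator equation on the joint Hilbert space of bulk fields and surface positions described in Section~\ref{Premises}, so continuous deformations of $\mathcal{X}_R$ must be interpreted in a gauge-covariant way, and one must verify that $p$ lies strictly in the bulk rather than at the common anchoring $\partial R=\partial D_R$. A secondary subtlety worth flagging is the need for a single consistent choice of entanglement side ($S_\mathrm{in}$ vs $S_\mathrm{out}$) for $\mathcal{X}_R$, $\tilde N$, and $H$; this is admissible because the GSL holds on either side via strong subadditivity, as reviewed around Fig.~\ref{HorizonSlice}.
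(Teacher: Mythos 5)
Your proposal is correct and follows essentially the same route as the paper: assume $\mathcal{X}_R$ meets the domain of influence, produce a first-touch tangency with a causal horizon by a continuous shrinking of the boundary region, then chain Theorem \ref{Wall12Thm1}, the GSL, and quantum extremality to get $0 \geq \delta S_\mathrm{gen}(H)/\delta\Sigma^a\, k^a \geq 0$, with the equality clause handling the non-generic coincidence and time reversal handling the past horizon. The only (cosmetic) differences are that the paper performs the tangency with the horizon of the shrunken region $\Xi$ rather than of $D_R$ itself, and note that $\partial I^-(D_R)$ is a \emph{future} causal horizon in the paper's terminology, so it is the ordinary (not time-reversed) GSL you are invoking there, as you in fact do.
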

\begin{proof}
\noindent This proof is a reversed application of the method used in \cite{EngelhardtWall} to prove the existence of classical barriers to extremal surfaces, which is also used below in Theorems  \ref{NegBarrier} and \ref{ZeroBarrier} to prove the existence of quantum barriers. We would like to show that  $C_R$ cannot extend beyond $\mathcal{X}_{R}$, i.e. $C_R\ \cap \ \text{Int}\left(\mathcal{X}_{R}\right)=\emptyset$.  As defined in Section \ref{intro}, Int$\left(\mathcal{X}_{R}\right)$ is that part of the AdS-Cauchy surface which is on the $\bar{R}$ side, but in this proof it is more convenient to use the codimension 0 domain of dependence $\mathbf{Int}(\mathcal{X}_{R}) \equiv D_{\mathrm{Int}\left(\mathcal{X}_{R}\right)}$; similarly for $\mathbf{Ext}(\mathcal{X}_{R})$.

Assume for contradiction that $C_{R}\cap \  \textbf{Int}(\mathcal{X}_{R})\neq \emptyset$.  Consider continuously shrinking $D_R$ to a new boundary spacetime region $\Xi$ (not necessarily a domain of dependence) such that the causal surface $C_\Xi = \partial I^+(\Xi) \cap \partial I^-(\Xi)$ is entirely contained in \textbf{Ext}$(\mathcal{X}_{R})$.  Let $H^+_\Xi = \partial I^-(\Xi)$ be the associated future causal horizon, and $H^-_\Xi = I^+(\Xi)$ the past horizon.  Because the shrinking action is continuous, we can find a choice of $\Xi$ such that $H^\pm_\Xi$ coincides with $\mathcal{X}_{R}$ at some points $\{p\}$ and is tangent to it at those points, and elsewhere lies in \textbf{Ext}$(\mathcal{X}_{R})$.  Without loss of generality, we consider the case in which it coincides with $H^+$; the case of $H^-$ is exactly the same except that we would need to use the time reverse of the GSL below.  This is illustrated in Fig. \ref{Daleth}.

\begin{figure}[ht]
\begin{center}
\includegraphics[width=7cm]{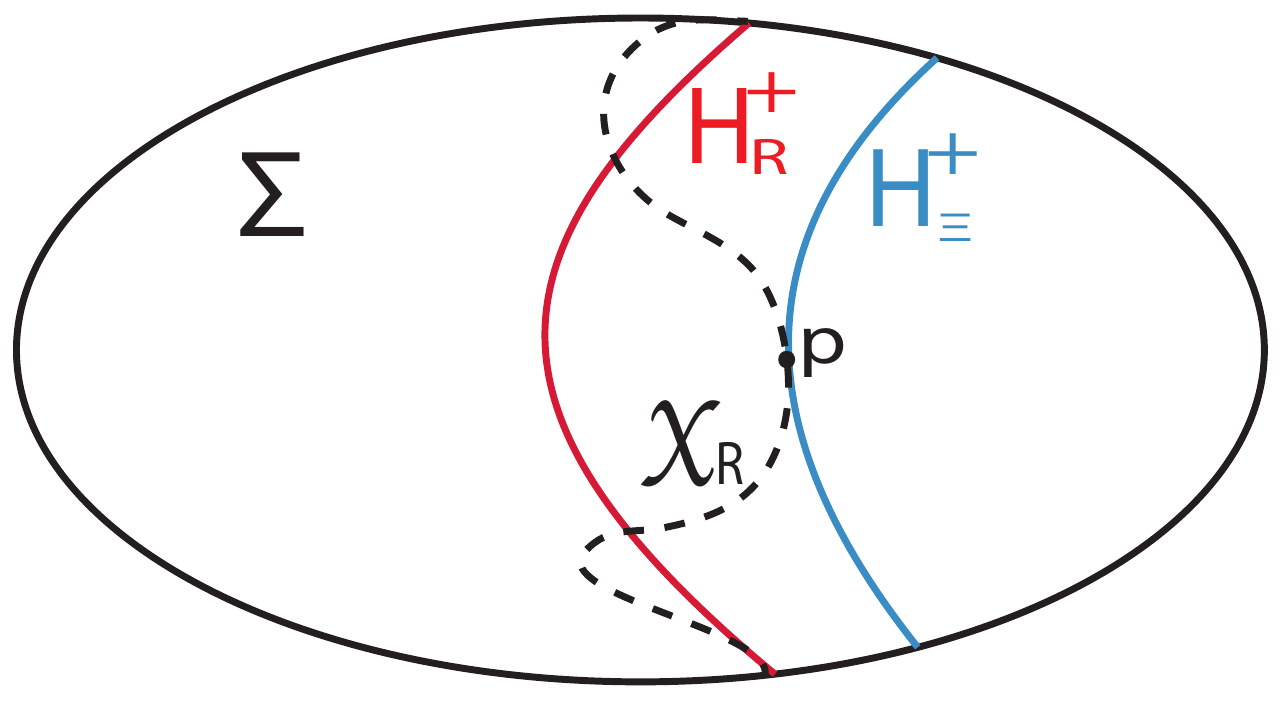} 
\caption{\small{A projection of $H^{+}_{R}$ (depicted in red) and $H^{+}_{\Xi}$ (depicted in blue) onto a slice $\Sigma$ passing through $\mathcal{X}_{R}$ (dashed black).}}
\label{Daleth}
\end{center}
\end{figure}
At any such coincident point $p$, let us functionally differentiate $\mathcal{X}_{R}$ and a slice of $H^+_\Xi$ with respect to their shared normal directions $n^{a}$.  By construction, $H^+_\Xi\subset \text{Ext}(\mathcal{X}_{R})$, so by Eq. \ref{Diff},
\begin{equation}\frac{\delta S_{\text{gen}} (N(\mathcal{X}_{R}))}{\delta n^{a}}k^{a} \geq \frac{\delta S_{\text{gen}} (H^+_\Xi)}{\delta n^{a}}k^{a},  \label{ExtremalCausalComparison}
\end{equation}
where $N(\mathcal{X}_{R})$ is the null surface generated by shooting out light rays from $\mathcal{X}_{R}$ in the future-outwards direction (towards $R$).  By the definition of a quantum extremal surface, the left hand side of Eq. \ref{ExtremalCausalComparison} vanishes, so we find that for future-outward $k^a$,
\begin{equation}
\frac{\delta  S_{\text{gen}} (H^+_\Xi)}{\delta n^{a}}k^{a} \leq 0 \label{CausalNeg},
\end{equation}
with equality only if $\mathcal{X}_{R}$ lies on $H^+_\Xi$ in a neighborhood of $p$.

But the GSL says that
\begin{equation} \frac{\delta  S_{\text{gen}} (H^+_\Xi)}{\delta n^{a}}k^{a} \geq  0, \label{CausalFut} \end{equation}
where the inequality can be saturated only in non-generic situations.  In the generic case, this is in direct contradiction with Eq. \ref{CausalNeg}.  We have thus shown that the quantum extremal surface must lie outside the causal extremal surface, for generic spacetimes.

Suppose now that the spacetime falls under the non-generic case where the inequalities are saturated.  By continuity with the generic conclusion, $\mathcal{X}_{R}$ cannot be outside of either horizon; at worst $\mathcal{X}_{R} \in H^+_\Xi$ or $\mathcal{X}_{R} \in H^-_\Xi$ or both (in which case $\mathcal{X}_{R} = C_R$).  Thus $C_R$ is either spacelike or null separated from $\mathcal{X}_{R}$, and is closer to $R$.  We therefore find that quantum extremal surfaces probe a deeper region of the spacetime than the causal surface does.


\end{proof}

Note that this proof is valid whether we define ${\cal X}_R$ using $S_\mathrm{out}$ or $S_\mathrm{in}$, as long as we use the same side to define the extremal surface and the horizon entropy (since the GSL is valid either way). This, combined with the fact that $\partial R=\partial\bar{R}$, shows that \textit{both} the extremal surface $\mathcal{X}_{R}$ and the complementary extremal surface $\mathcal{X}_{\bar{R}}$ lie deeper in the bulk than the causal surface $C_{R}$. This is a quantum generalization of a theorem in \cite{WallMaximin}, which showed the same for the classical extremal surface $X_{R}$ and $X_{\bar{R}}$.

We can also prove a quantum generalization of the known classical black hole result, that the apparent horizon always lies within the event horizon of a black hole. In order to state this generalization precisely, we define the notion of a quantum apparent horizon\footnote{We thank D. Marolf for pointing this out to us.}.

Let $\Sigma$ be an AdS-Cauchy surface, and let $\mathcal{T}$ be the union of all quantum trapped surfaces on $\Sigma$. Define the \emph{quantum apparent horizon} $\mathcal{H}_{\mathrm{app}}$ to be the boundary of $\mathcal{T}$ (on $\Sigma$).  By Theorem \ref{Wall12Thm1}, $\mathcal{H}_{\text{app}}$ is a quantum marginally trapped surface.  Since in the classical limit, quantum trapped and marginally trapped surfaces reduce to ordinary trapped and marginally trapped surfaces, the quantum apparent horizon reduces to the ordinary (classical) apparent horizon in the limit where $\hbar \rightarrow 0$.

\begin{thm} The quantum apparent horizon always lies inside the horizon. \end{thm}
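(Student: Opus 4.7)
The plan is to argue by contradiction, closely mimicking the proof structure of Theorem \ref{causal} with the event horizon $H^+$ playing the role of the causal surface $C_R$ and the GSL providing the key monotonicity input. Assume for contradiction that some point $p$ of the quantum apparent horizon $\mathcal{H}_\mathrm{app}$ lies in the exterior $\mathcal{E} = I^-(\text{AdS boundary})$ of $H^+$. Since $\mathcal{H}_\mathrm{app} = \partial \mathcal{T}$, one may either work with $\mathcal{H}_\mathrm{app}$ itself (which is quantum marginally trapped by Theorem \ref{Wall12Thm1}, as noted in the definition immediately preceding the theorem) or, to obtain strict inequalities, with a strictly quantum trapped surface $T \in \mathcal{T}$ arbitrarily close to $p$.

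The next step is to construct a future causal horizon tangent to $\mathcal{H}_\mathrm{app}$ from the exterior side. Consider the one-parameter family $\tilde{H}^+_\Omega = \partial I^-(\Omega)$ of future causal horizons indexed by subsets $\Omega$ of the future AdS conformal boundary (or, equivalently, by families of future-infinite bulk worldlines $W^+ \subset \mathcal{E}$). When $\Omega$ equals the entire future boundary, $\tilde{H}^+_\Omega = H^+$ lies strictly deeper in the bulk than $\mathcal{H}_\mathrm{app}$ near $p$ by our assumption. As $\Omega$ is continuously shrunk, $\tilde{H}^+_\Omega$ retreats toward the boundary, and for sufficiently small $\Omega$ lies entirely in $\text{Ext}(\mathcal{H}_\mathrm{app})$. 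By continuity, there is an intermediate $\Omega_*$ for which $\tilde{H}^+_{\Omega_*}$ is first tangent to $\mathcal{H}_\mathrm{app}$ at some point $p_*$, and otherwise lies in $\text{Ext}(\mathcal{H}_\mathrm{app})$ in a neighborhood of $p_*$.

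Finally, apply Theorem \ref{Wall12Thm1} to the null surfaces $\tilde{H}^+_{\Omega_*}$ and $N(\mathcal{H}_\mathrm{app})$ (the outward-future null congruence from $\mathcal{H}_\mathrm{app}$), both of which pass through $p_*$. The condition $\tilde{H}^+_{\Omega_*} \subset \text{Ext}(\mathcal{H}_\mathrm{app})$ near $p_*$, combined with Eq.~(\ref{Diff}), yields
\begin{equation}
\frac{\delta S_\mathrm{gen}(N(\mathcal{H}_\mathrm{app}))}{\delta n^a} k^a \;\geq\; \frac{\delta S_\mathrm{gen}(\tilde{H}^+_{\Omega_*})}{\delta n^a} k^a.
\end{equation}
The left-hand side vanishes by the quantum marginally trapped condition on $\mathcal{H}_\mathrm{app}$ (or is strictly negative if we replaced $\mathcal{H}_\mathrm{app}$ by a nearby strictly quantum trapped $T$), while the right-hand side is strictly positive by the generic GSL applied to the future causal horizon $\tilde{H}^+_{\Omega_*}$. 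This is a contradiction, establishing that $\mathcal{H}_\mathrm{app}$ cannot poke out past $H^+$. The non-generic saturated case, in which $\tilde{H}^+_{\Omega_*}$ and $N(\mathcal{H}_\mathrm{app})$ coincide in a neighborhood of $p_*$, is handled exactly as in the proof of Theorem \ref{causal} and corresponds to the boundary case in which $\mathcal{H}_\mathrm{app}$ touches $H^+$ tangentially.

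The main obstacle is executing Step 2: verifying that the family $\tilde{H}^+_\Omega$ varies continuously with $\Omega$ and that shrinking $\Omega$ genuinely produces a tangent configuration with the correct orientation (tangency from the exterior side, not an intersection) so that Theorem \ref{Wall12Thm1} delivers the desired inequality rather than a trivially satisfied one. Using the more flexible family $\partial I^-(W^+)$ for bulk worldlines $W^+ \subset \mathcal{E}$ may be necessary in cases where $\mathcal{H}_\mathrm{app}$ has complicated topology or multiple components. Care must also be taken with the orientations of exteriors when invoking Theorem \ref{Wall12Thm1}, as in the analogous step of Theorem \ref{causal}.
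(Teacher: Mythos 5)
Your proposal is correct and takes essentially the same route as the paper: the paper simply observes that the proof of Theorem \ref{causal} only uses the contracted variation $(\delta S_\mathrm{gen}/\delta\mathcal{X}^a)k^a$ in the future-outward null direction, which vanishes for a quantum marginally trapped surface just as for a quantum extremal one, so that proof carries over verbatim. Your write-up merely executes that adapted argument explicitly (shrinking the boundary region to obtain a tangent causal horizon, applying Theorem \ref{Wall12Thm1}, and invoking the GSL), which matches the paper's intended proof step for step.
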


\begin{proof} The proof follows directly from Theorem \ref{causal}. Since the variation $\delta S_{\text{gen}}/\delta \mathcal{X}^{a}$ only appears when contracted with a null normal, \emph{i.e.} $\left(\delta S_{\text{gen}}/\delta \mathcal{X}^{a}\right) k^{a} $, the proof of Theorem \ref{causal} also applies to marginally quantum trapped surfaces. In particular, a marginally trapped quantum surface $\mathcal{T}$ anchored at $\partial \bar{ R}$ is outside and spacelike to $C_{R}$.\end{proof}

This result, while of interest in its own right, also gives us confidence that quantum extremal, trapped, and marginally trapped surfaces are the correct ways of generalizing the same classical concepts.

\subsection{A limit to causal signaling}

The fact that the quantum extremal surface $\mathcal{X}_{R}$ always lies deeper inside the spacetime than the causal surface $C_R$ has interesting consequences for the resulting geometries.  Because the quantum extremal surface lies outside the causal surface, it limits the farthest extent to which causal sources can propagate in from the boundary.

Suppose we throw some causal signals in from the boundary.  We could do this by acting on the boundary field theory region $D_R$ at some time $t$ with a unitary operator $U(t)$ whose effects on the boundary are purely local.  An example of this, in the limit where the bulk fields are free, would be if $U(t)$ translates a bulk field $\phi$ by some function of the spatial coordinates: $\phi \to \phi + f(x)$.  In the interacting case it is probably necessary to smear out the operator a little bit in the time direction so that it lies within a small time interval $t \pm \epsilon$ (but staying inside $D_R$).  We can then define $U$ by deforming the Hamiltonian by the addition of relevant or marginal operators.  In either case, the resulting pulse will be localized inside the causal wedge $W_R$.

At any finite order in $\hbar$, the resulting signal will propagate locally within the bulk, so it can also reach the past or future of $W_R$.  Thus, if we act with a single unitary operator $U(t)$, this implies that the signal must remain within the causal domain of influence $I_R = I^+(W_R) \cup I^-(W_R)$, and thus does not extend past $C_R$ in a spacelike direction.  So the pulse cannot reach the quantum extremal surface ${\cal X}_R$, since by Theorem \ref{causal}, ${\cal X}_R$ is spacelike outside of $C_R$.  Hence the area of ${\cal X}_R$ cannot be affected, and neither can $S_\mathrm{ent}$ be affected since the operator is unitary.  See Fig. \ref{CausalSignal}.  ${\cal X}_R$, its area, and anything behind it are therefore identical in the perturbed and unperturbed spacetimes.

\begin{figure}[ht]
\begin{center}
\includegraphics[width=5cm]{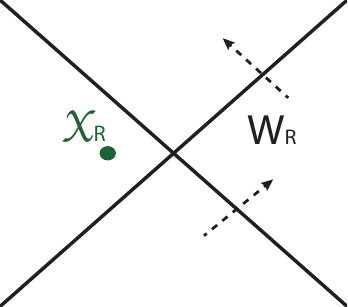} 
\caption{\small{A conformal diagram illustrating the fact that $\mathcal{X}_{R}$ cannot be affected by unitary operators on $R$, since the resulting causal signals remain within the domain of influence $I_R$ (consisting of the top, bottom, and rightmost quadrants)}}
\label{CausalSignal}
\end{center}
\end{figure}

In the cases in which a region $R$ has multiple quantum extremal surfaces ${\mathcal{X}_R}$, one cannot signal past any of them, so among these the tightest bound comes from the one closest to $R$.  This will not necessarily be the one with least entropy, which would be used to calculate the holographic entanglement entropy.  

Similarly, one could act on the CFT boundary with operators which excite $n$-point functions of the fields.  So long as $n$ is order unity in $N$ (otherwise the concept of bulk locality might break down), the fields should still propagate into the bulk causally.

This shows that the region behind ${\cal X}_R$ is invariant under a important class of operators acting on $D_R$: local bulk-causal unitaries.\footnote{One can also analyze sending in signals from the boundary which are local but nonunitary, thus changing the entropy of $R$.  One could do this by coupling $R$ to an auxilliary system by means of a unitary operator.  But beware: these can affect $S_\mathrm{out}$, and therefore also the location of the quantum extremal surface $\mathcal{X}_{R}$!  However, if one defines ${\cal X}_R$ using $S_\mathrm{in}$, the proof goes through.}  

Of course, there is also a classical version of this statement for the classical extremal surface $X_R$, but it is restricted to the case when the 
bulk
obeys the null energy condition, which can be violated by quantum fields.  This could present a consistency problem for FLM: if, in the region outside $C(R)$, the null energy falling across the horizons $H^\pm = \partial^\pm W_R$ is negative, this can cause the extremal surface $X_R$ to lie inside the causal wedge.  The area or $S_\mathrm{out}$ of $X_R$ could then be affected by unitary signals sent in from the boundary, ruling out $X_R$ for use in the holographic entanglement entropy $S(R)$.  However, it should be noted that if the null energy condition violation is of order $\hbar$, then $X_R$ and $\mathcal{X}_R$ will also be separated by a distance of order $\hbar$.  This requires the signal sent in from $D_R$ to be highly boosted relative to the separation, by an amount of order $\hbar^{-1}$.  So this argument should be qualified with the caveat that it may require quantum gravity effects outside the domain of validity of the $\hbar$ expansion.

\subsection{With time folds}

One can also consider the effects of multiple unitary operators using the ``time fold'' construction \cite{MarolfPolchinski}.  Even in this case we can show that causal signals cannot influence the region behind ${\cal X}_R$.

A nice example was provided by Shenker and Stanford \cite{ShenkerStanford}.  They started with an eternal AdS-Schwarzshild black hole, dual to the thermofield double state $|TFD\rangle$.  By acting on this state with a sequence of unitary operators at various times $t_0,\,t_1,\ldots t_n$:
\begin{equation}\label{U's}
U(t_n) \ldots U(t_2) U(t_1) U(t_0) |TFD\rangle,
\end{equation}
they described signals propagating into the bulk from either or both of the two CFT boundaries (CFT$_1$ and CFT$_2$).  In the case where the times are not in sequential order, it is necessary to use the time fold formalism \cite{MarolfPolchinski} to work out the resulting bulk spacetime.  (For ease of calculation, Shenker and Stanford ultimately take the limit where the times go to $t = \pm \infty$, but we will avoid taking this limit here.)

Surprisingly, this can be used to create causally propagating sources that are outside of the domain of influence $I_\mathrm{CFT's} = I_{\mathrm{CFT}_1 + \mathrm{CFT}_2}$.  This is because the unitary operators to the left in Eq. (\ref{U's}) can change the relative location of the earlier signals to the past or future horizon.  The leftmost source always propagates from the boundary, but the others may emerge from the past or future black hole singularities, outside of $I_\mathrm{CFT's}$.

We will generalize this construction away from the thermofield double state to the case of arbitrary states and arbitrary regions $R$.  There might not be any singularities, but the basic point remains that the perturbations can be outside of $I_R$.  However, we can show that the perturbations cannot extend past $\mathcal{X}_{R}$:

\begin{thm}
The perturbation described in Eq. \ref{U's} cannot affect the bulk past $\mathcal{X}_{R}$, for any number $n$ of unitary operators.
\end{thm}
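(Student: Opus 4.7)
The plan is to induct on the number $n$ of unitary operators. The base case $n=1$ is exactly the content of the previous subsection: a single local unitary on $D_R$ sends a causal pulse confined to the domain of influence $I_R$, which by Theorem~\ref{causal} is spacelike separated from $\mathcal{X}_R$. For the inductive step, let $|\psi_k\rangle = U(t_k)\cdots U(t_0)|\psi\rangle$ and let $M_k$ be the bulk geometry dual to $|\psi_k\rangle$ (a time-fold geometry in general). Applying the inductive hypothesis to the shorter string $U(t_{n-1})\cdots U(t_0)$, the quantum extremal surface $\mathcal{X}_R$ of $M_{n-1}$, together with everything on the $\bar R$ side of it, coincides with that of the unperturbed bulk $M$.

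The heart of the inductive step is then the single-unitary argument of the previous subsection run \emph{inside} $M_{n-1}$ rather than inside $M$. Since $U(t_n)$ is local on $D_R$, at each finite order in the $\hbar$ expansion the bulk perturbation it produces propagates locally and stays within the causal domain of influence $I_R$ computed in $M_{n-1}$. Theorem~\ref{causal}, applied to the spacetime $M_{n-1}$, says that the quantum extremal surface of $M_{n-1}$ is spacelike outside the causal surface $C_R$ of $M_{n-1}$, so the perturbation from $U(t_n)$ never reaches $\mathcal{X}_R$. The area of $\mathcal{X}_R$ is therefore unaffected; $U(t_n)$ is unitary, so $S_\mathrm{ent}$ across $\mathcal{X}_R$ is preserved; and since the perturbation is supported outside $\mathcal{X}_R$, the first-order variation $\delta S_\mathrm{gen}/\delta \mathcal{X}^a$ at $\mathcal{X}_R$ is untouched, so $\mathcal{X}_R$ remains quantum extremal at the same location in $M_n$. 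Combined with the inductive hypothesis, this closes the induction.

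The step I expect to be the main obstacle is the legitimacy of invoking Theorem~\ref{causal} in the intermediate geometry $M_{n-1}$, which is in general not a standard globally hyperbolic spacetime but a time-fold geometry with multiple sheets glued along the earlier source insertions at $t_0,\ldots,t_{n-1}$. One has to verify that the proof of Theorem~\ref{causal}---shrinking $D_R$ to a region $\Xi$ whose causal horizon $H^+_\Xi$ touches $\mathcal{X}_R$ tangentially, and applying the GSL there---can be carried out on the sheet of $M_{n-1}$ on which $U(t_n)$ acts. This should be tractable because the GSL and Theorem~\ref{Wall12Thm1} are local statements about null congruences and their generalized entropies, but it does require that the GSL be taken to hold sheet-by-sheet in a time-fold bulk. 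It is worth emphasizing that the Shenker--Stanford phenomenon, in which earlier signals escape $I_R$ of the \emph{original} $M$, is fully consistent with this plan: each induction step only needs to control the new perturbation sourced by $U(t_n)$, and that perturbation is governed by the causal structure of $M_{n-1}$, not of $M$.
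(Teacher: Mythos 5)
Your proof is correct and follows essentially the same route as the paper: induction on $n$, with the inductive step applying Theorem~\ref{causal} in the $(n-1)$-perturbed geometry to place $\mathcal{X}_R$ spacelike outside $^{n-1}C_R$, so that the $n$-th signal (confined to $^{n-1}I_R$) cannot alter the area, $S_\mathrm{ent}$, or hence the location of $\mathcal{X}_R$ or the region behind it. Your added caveat about justifying Theorem~\ref{causal} sheet-by-sheet in the time-fold geometry is a reasonable point of caution that the paper passes over silently, but it does not change the argument.
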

\begin{proof}
We will show this by induction.  In the last section we proved the case where $n = 1$.  We assume the statement holds for the $(n-1)$ rightmost signals in Eq. \ref{U's}, and let \nolinebreak $^{n-1}C_R$ be the causal surface in the resulting bulk spacetime.  We then consider the effects of the $n$-th signal, using the fact that $\mathcal{X}_{R}$ is deeper than \nolinebreak $^{n-1}C_R$ (Theorem \ref{causal}). 

Since the $n$-th signal is unitary, it does not affect the entropy $S_\mathrm{ent}$ on any side of any bulk surface anchored to $\partial R$ lying outside of $^{n-1}I_R$.  Therefore, the location of $\mathcal{X}_{R}$ cannot be affected through its dependence on $S_\mathrm{ent}$.  

$\mathcal{X}_{R}$ also depends on the area $A$.  But the $n$-th signal does not change the geometry outside of \nolinebreak $^{n-1}I_R$ either, by causality and the fact that $\mathcal{X}_{R}$ is outside of $I_R$.\footnote{There may be multiple quantum extremal surfaces, but if so, similar reasoning shows that they cannot be created or destroyed.}

Since $\mathcal{X}_{R}$ cannot be affected, \emph{a fortiori} the region behind it cannot be affected either.  Hence, assuming that the first $n-1$ unitary operators cannot modify the spacetime deeper than $\mathcal{X}_{R}$, neither can the $n$-th, proving the result.  To summarize, each signal can change the bulk up to $C_R$, and it can change the location of $C_R$, but it cannot change the fact that $C_R$ is closer to the boundary than $\mathcal{X}_{R}$
\end{proof}

Therefore $\mathcal{X}_{R}$, its entropy, its geometry, and the geometry behind it cannot be affected by any unitary signals propagating in from $R$.  This defines a sacrosanct region which cannot be causally influenced by $R$.  For example, in the case of an eternal black hole with an Einstein-Rosen bridge, even if we allow signals to be sent in from both sides, the geometry of the extremal surface lodged in its throat cannot be changed by the Shenker-Stanford construction.  Nor can any new extremal surfaces be created.  If we start with a throat with two extremal surfaces in the throat (e.g. because there is a bag of gold \cite{Wheeler,  Hsu:2008yi} in the interior), the region in between them is sacrosanct.  This provides some evidence that the information in the bag of gold is not in fact contained in the CFT, as suggested in \cite{Freivogel:2005qh, Marolf:2008tx, Marolf:2012xe}, although a proof of this fact would require also analyzing operators which do not propagate into the bulk causally.

\section{Barriers to Quantum Extremal Surfaces}\label{Barriers}
%
In a recent paper \cite{EngelhardtWall}, we found that classical geometries feature extremal surface barriers, i.e. regions which limit the reach of extremal surfaces anchored within a boundary region $R$.  In some cases, there are even barriers when $R$ is the entire boundary, preventing any boundary-anchored extremal surface from accessing an entire portion of the bulk.  In this section we will extend several of our results to quantum extremal surfaces.

Since the established AdS/CFT dictionary translates extremal surfaces into non-local boundary observables, the blindness of such probes is especially troubling in the case where the reach of local observables is likewise limited.  As in the previous section, this situation raises the question of whether the dual field theory contains any information about the blind region.


We shall use terminology from \cite{EngelhardtWall} to qualify quantum barriers. Define a splitting surface $M$ to be a codimension 1 surface which divides the spacetime into two regions, Ext$(M)$ and Int$(M)$. Given a splitting surface $M$, a quantum extremal surface $\mathcal{X}_{i}$ is said to be $M$-deformable if it can be continuously deformed from an initial surface $\mathcal{X}_{0}$ which lies entirely in Ext$(M)$, via a family of surfaces $\{\mathcal{X}_{i}\}$ each anchored to the boundary in Ext$(M)$.

Using Theorem \ref{Wall12Thm1}, we  prove the following results about the reach of quantum extremal surfaces:
\begin{thm} \label{NegBarrier}Let $M$ be a smooth, null, splitting surface.  If every spacelike slice $Q$ of $M$ is quantum trapped, i.e. for any point $q$ on $Q$,
\begin{equation}
\left. \frac{\delta S_\mathrm{gen}(M)}{\delta Q^{a}}k^{a}\right|_{q}<0 
\end{equation}
then no $M$-deformable quantum extremal surface ever touches $M$.  
\end{thm}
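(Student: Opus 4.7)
The plan is to argue by contradiction, applying Theorem \ref{Wall12Thm1} in direct analogy with the proof of Theorem \ref{causal} (and with the classical barrier arguments of \cite{EngelhardtWall}). Suppose that some $M$-deformable quantum extremal surface $\mathcal{X}$ does touch $M$. Using the continuous interpolation from $\mathcal{X}_0 \subset \text{Ext}(M)$ to $\mathcal{X}$, I would pick out the first member $\mathcal{X}_*$ of the family that just barely meets $M$ at some point $p$, with $\mathcal{X}_* \subset \text{Ext}(M) \cup \{p\}$ locally. Because $\mathcal{X}_*$ reaches $M$ at $p$ without crossing into $\text{Int}(M)$, it must be tangent there to the slice $Q$ of $M$ through $p$; the tangent planes agree and one of the future-directed null normals of $\mathcal{X}_*$ at $p$ can be identified with the null generator $k^a$ of $M$.

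Next I would introduce the auxiliary null splitting surface $N = N(\mathcal{X}_*)$ generated by firing a null congruence from $\mathcal{X}_*$ along $k^a$. Near $p$, $N$ and $M$ share the same tangent plane and null generator, so they coincide at $p$. Since $\mathcal{X}_*$ lies in $\text{Ext}(M)$ away from $p$ and the generators of $N$ and $M$ propagate in the same direction, the rays of $N$ stay on the exterior side of $M$ to first order, giving $M \cap \text{Ext}(N) = \emptyset$ in a neighborhood of $p$. Theorem \ref{Wall12Thm1} then applies and produces, in the differential form \reef{Diff},
\begin{equation}
\frac{\delta S_\mathrm{gen}(M)}{\delta \Sigma^a}k^a \;\geq\; \frac{\delta S_\mathrm{gen}(N)}{\delta \Sigma^a}k^a .
\end{equation}

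The right-hand side vanishes by quantum extremality of $\mathcal{X}_*$: a variation of the slice of $N$ at $p$ along the null normal $k^a$ is the same as varying $\mathcal{X}_*$ itself along $k^a$, which yields zero. The left-hand side is strictly negative by the quantum trapped hypothesis on the slices of $M$. The inequality then reads $(\text{negative}) \geq 0$, the desired contradiction. The step I expect to require the most care is the first-touching and tangency argument: one must verify that the family $\{\mathcal{X}_i\}$ admits a well-defined limit $\mathcal{X}_*$ that remains quantum extremal and is tangent to (rather than crossing) $M$ at the touching point, and that this tangency can be promoted from a classical picture to a statement about operator-valued surfaces. This is the same subtlety encountered in the classical barrier proofs of \cite{EngelhardtWall}, and I would address it in the quantum setting using the gauge-frame/sharp-location prescription outlined in Section \ref{Premises}.
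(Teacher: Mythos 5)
Your proposal is correct and follows essentially the same route as the paper's own proof: identify a tangential touching surface in the deformation family (your ``first touching'' $\mathcal{X}_*$ plays the role of the paper's ``midway'' surface), shoot out the null surface $N(\mathcal{X}_*)$, apply Theorem \ref{Wall12Thm1} to get $\delta S_\mathrm{gen}(M)\,k^a \geq \delta S_\mathrm{gen}(N)\,k^a = 0$ by quantum extremality, and contradict the strict quantum-trapped condition on $M$. Your closing caveat about the tangency/limit argument is the same implicit subtlety the paper leaves to the gauge-frame discussion of Section \ref{Premises}, so nothing essential is missing.
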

\begin{proof} Let $\{\mathcal{X}_{i}\}$ be a family of $M$-deformable surfaces as described above, and assume that there exists a surface $\mathcal{X}_{I}\in\{\mathcal{X}_{i}\}$ such that $\mathcal{X}_{I}$ crosses $M$, i.e. $\mathcal{X}_{I} \cap \text{Int}\left ( M\right)\neq \emptyset$. Since $\mathcal{X}_{0}$ and $\mathcal{X}_{I}$ are related via a series of smooth deformations, there is a ``midway'' surface $\mathcal{X}$ which coincides with $Q$ at some set of points $\{q_{\alpha}\}$ and is tangent to $Q$ at those points. Let $q\in \{q_{\alpha}\}$ be one such point.  Let $N$ be the codimension 1 null surface generated by null congruences shot from $\mathcal{X}$. In a small neighborhood of $q$, the infinitesimal normals to $Q$ and $\mathcal{X}$ agree, and the null generators on $N$ and $M$ likewise agree. By Theorem \ref{Wall12Thm1}, there exists a variation $\delta Q^{a}$ along the shared normal directions, such that at some point $p$ in a neighborhood of $q$:
\begin{equation}\left .\frac{\delta S_\mathrm{gen}\left(M\right)}{\delta Q^{a}}k^{a} \right | _{p}\geq \left . \frac{\delta  S_\mathrm{gen}\left(N\right)}{\delta Q^{a}}k^{a}\right | _{p} =\left . \frac{\delta S_\mathrm{gen}\left(N\right)}{\delta \mathcal{X}^{a}}k^{a} \right | _{p}\ =0  \end{equation}
where the last equality follows from the definition of a quantum extremal surface. This contradicts the assumption that at any point on $Q$, the generalized entropy is decreasing. Therefore $M$ cannot coincide with $\mathcal{X}$ at any point. 
%
\end{proof}

An analogous theorem can be proven in the non-generic case where the generalized entropy is unchanging on a null surface (e.g. a stationary horizon in the Hartle-Hawking state):

\begin{thm} \label{ZeroBarrier}Let $M$ again be a smooth null splitting surface.  If every spacelike slice $Q$ of $M$ is quantum {\bf marginally} trapped, i.e. for any point $q$ on $Q$,
\begin{equation} \frac{\delta S_\mathrm{gen}}{\delta Q^{a}}k^{a}=0\end{equation}
\noindent then no $M$-deformable connected quantum extremal surface can ever cross or touch $M$.\end{thm}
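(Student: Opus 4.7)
The plan is to follow the strategy of Theorem~\ref{NegBarrier} and then exploit the equality clause of Theorem~\ref{Wall12Thm1} in place of the strict inequality, combined with a connectedness argument. I would suppose for contradiction that some connected $M$-deformable quantum extremal surface $\mathcal{X}_{I}$ in the family $\{\mathcal{X}_{i}\}$ touches or crosses $M$. By continuity of the deformation, starting from $\mathcal{X}_{0}\subset\mathrm{Ext}(M)$, one obtains a ``midway'' surface $\mathcal{X}$ that just tangentially touches some spacelike slice $Q$ of $M$ at some points $\{q_{\alpha}\}$. At any such touching point $q$, Theorem~\ref{Wall12Thm1} supplies
\begin{equation}
\left.\frac{\delta S_\mathrm{gen}(M)}{\delta Q^{a}}k^{a}\right|_{q} \;\geq\; \left.\frac{\delta S_\mathrm{gen}(N)}{\delta Q^{a}}k^{a}\right|_{q} \;=\; 0,
\end{equation}
where $N$ is the null congruence shot off $\mathcal{X}$ and the right-hand side vanishes by quantum extremality. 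The marginally trapped hypothesis on $M$ forces the left-hand side to also vanish, so the inequality is saturated.

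The strict-inequality route to a contradiction is no longer available. Instead I would invoke the equality clause of Theorem~\ref{Wall12Thm1}: equality can only hold if $N$ and $M$ coincide in a neighborhood of $q$, whence $\mathcal{X}$ coincides with the slice $Q$ on a neighborhood of $q$. To upgrade this local coincidence to a global one, consider the set $T\subseteq\mathcal{X}$ of points at which $\mathcal{X}$ locally coincides with a slice of $M$. The previous paragraph shows that $T$ is open, and it is nonempty because each $q_{\alpha}\in T$. Closedness of $T$ follows by the same argument applied at limit points: if $p_{n}\in T$ and $p_{n}\to p$, continuity places $p$ on $M$ with $\mathcal{X}$ tangent to $M$ there, so the equality-clause analysis rerun at $p$ gives $p\in T$. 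By connectedness of $\mathcal{X}$, we conclude $T=\mathcal{X}$, that is, $\mathcal{X}$ lies entirely on $M$.

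This now contradicts $M$-deformability. Since $\mathrm{Ext}(M)$ is the open region bounded by $M$, one has $\mathrm{Ext}(M)\cap M=\emptyset$; but the anchoring requirement on every member of $\{\mathcal{X}_{i}\}$ forces $\partial\mathcal{X}\subset\mathrm{Ext}(M)$, whereas $\mathcal{X}\subset M$ forces $\partial\mathcal{X}\subset M$, giving the desired contradiction. I expect the main obstacle to be verifying closedness of $T$ cleanly, since this requires ensuring that the smoothness and $\hbar$-expansion hypotheses of Theorem~\ref{Wall12Thm1} transfer to arbitrary limit points of $T$; this should follow from continuity of the geometric data on $\mathcal{X}$ and $M$, but it is the step most sensitive to regularity assumptions.
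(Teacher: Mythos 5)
Your proof is correct and follows essentially the same route as the paper: the midway surface, the saturated inequality from Theorem \ref{Wall12Thm1}, the equality clause forcing local coincidence of $N$ with $M$, propagation to global coincidence via connectedness, and the contradiction with anchoring in $\mathrm{Ext}(M)$. Your explicit open--closed argument for the set $T$ is just a cleaner formalization of the paper's informal step of ``repeating the analysis at the edge of each neighborhood.''
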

\begin{proof} 
Let $\{\mathcal{X}_{i}\}$ be again a family of $M$-deformable surfaces anchored to Ext$(M)$, with $X_{0}\subset\text{Ext}(M)$ and $\mathcal{X}_{I}\cap\text{Int}(M)\neq \emptyset$. Take $\mathcal{X}$ as before to be the ``midway'' surface, and let $N$ be the codimension 1 null surface generated by null congruences shot from $\mathcal{X}$. Pick $Q$ to be a spatial slice of $M$ 
which passes through each point $p$ where $\mathcal{X}$ touches $M$.  

At a coincident point $p$:
\begin{equation} \left .\frac{\delta S_\mathrm{gen}\left(M\right)}{\delta Q^{a}}\right | _{p} \geq \left .\frac{\delta S_\mathrm{gen}\left(N\right)}{\delta Q^{a}}\right | _{p}= \left .\frac{\delta S_\mathrm{gen}\left(N\right)}{\delta \mathcal{X}^{a}}\right | _{p}=0\label{Equality}\end{equation}
\noindent By Theorem \ref{Wall12Thm1}, Eq. \ref{Equality} can only be saturated if $N$ and $M$ coincide on an entire neighborhood of $p$. We can repeat the analysis above at the edge of each neighborhood, so that $M$ and $N$ (and therefore $\mathcal{X}$ and $Q$) must coincide everywhere. Since $M$ and $N$ are smooth, codimension 1 null surfaces, we find that $N=M_{c}$, where $M_{c}$ is the connected component of $M$ containing $p$. However, this implies that $\mathcal{X}$ lies entirely on $M$, and therefore ${\cal X}$ cannot be anchored to Ext$(M)$, which is a contradiction.  We arrive at the conclusion that no surface in $\{\mathcal{X}_{i}\}$ can ever cross or touch $M$.
\end{proof}

The proof of Theorem \ref{ZeroBarrier} gives rise to an immediate corollary: 

\begin{cor} \label{ZeroCor} Let $M$ be a null splitting surface whose slices are all quantum marginally trapped, as above.  If ${\cal X}$ is a connected quantum extremal surface that touches and is tangent to $M$ at a point, then ${\cal X}$ lies on $M$.\end{cor}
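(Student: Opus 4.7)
The plan is to extract the corollary directly from the machinery already developed in the proof of Theorem \ref{ZeroBarrier}, by removing the deformability hypothesis and instead starting from the tangency condition at a single point $p$. The essential point is that the proof of Theorem \ref{ZeroBarrier} never actually used the $M$-deformability of $\mathcal{X}$ except to produce a tangency point; once tangency at $p$ is given, the rest of the argument is purely local and propagates outward.

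First I would set up the local comparison at $p$. Let $N$ be the null hypersurface generated by the null congruence shot from $\mathcal{X}$ in the direction of $M$ (i.e., sharing the null normal of $M$ at $p$). Because $\mathcal{X}$ is tangent to $M$ at $p$, the surfaces $N$ and $M$ coincide at $p$ and share a null generator there; moreover, without loss of generality we may orient $N$ so that $N$ lies locally in $\mathrm{Ext}(M)$ (otherwise we flip the roles of interior and exterior, which only changes a sign and is compatible with the marginally trapped hypothesis). Then Theorem \ref{Wall12Thm1} applies and yields, as in Eq.~\ref{Equality} of the preceding proof,
\begin{equation}
\left.\frac{\delta S_\mathrm{gen}(M)}{\delta Q^a}\right|_p \;\ge\; \left.\frac{\delta S_\mathrm{gen}(N)}{\delta Q^a}\right|_p \;=\; \left.\frac{\delta S_\mathrm{gen}(N)}{\delta \mathcal{X}^a}\right|_p \;=\; 0,
\end{equation}
where the first equality uses that $N$ and $\mathcal{X}$ agree infinitesimally at $p$, and the last uses that $\mathcal{X}$ is quantum extremal. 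But the marginal trapping of $M$ forces the left-hand side to equal $0$ as well, so the inequality from Theorem \ref{Wall12Thm1} is saturated, which by the rigidity clause of that theorem forces $N$ and $M$ to coincide on a whole neighborhood of $p$. In particular, $\mathcal{X}$ lies on $M$ throughout a neighborhood of $p$.

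Next I would propagate this coincidence globally along $\mathcal{X}$ using a standard connectedness argument. Let $U \subseteq \mathcal{X}$ be the set of points at which $\mathcal{X}$ coincides with $M$ on an open neighborhood; $U$ is nonempty by the previous paragraph and manifestly open. To see it is also closed, take a limit point $p' \in \mathcal{X}$ of $U$: by continuity $\mathcal{X}$ touches $M$ at $p'$, and since $\mathcal{X}$ and $M$ agree on a one-sided neighborhood of $p'$, they are tangent there. Applying the same local argument at $p'$ produces a new open neighborhood on which the two surfaces coincide, so $p' \in U$. Since $\mathcal{X}$ is connected, $U = \mathcal{X}$, which is exactly the claim $\mathcal{X} \subset M$.

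The only real subtlety, and the step I expect to require the most care, is the orientation choice for $N$ and the interpretation of the one-sided tangency: one must verify that the hypothesis of Theorem \ref{Wall12Thm1} (namely $N \cap \mathrm{Ext}(M) = \emptyset$, or the reversed version) is genuinely satisfied at $p$ given only tangency rather than a global deformation, and similarly at each propagation step. This follows because a codimension 2 spacelike surface tangent to a smooth null hypersurface at $p$ locally lies on one side of it up to higher-order corrections, and the marginal-trapping hypothesis is symmetric enough (applying to any spacelike slice $Q$ of $M$) that either side can be fed into Theorem \ref{Wall12Thm1} with no change to the conclusion.
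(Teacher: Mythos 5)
Your proposal is correct and matches the paper's intent: the paper's own proof is just the one-line remark that the corollary ``follows immediately from the proof of Theorem \ref{ZeroBarrier},'' and what you have written out---local application of Theorem \ref{Wall12Thm1} at the tangency point, saturation forced by marginal trapping, the rigidity clause giving coincidence on a neighborhood, and an open--closed connectedness argument to propagate along all of ${\cal X}$---is precisely the content of that proof once the $M$-deformability (used there only to manufacture the tangency point) is replaced by the hypothesized tangency. Your closing remark correctly identifies the one point the paper leaves implicit, namely that the one-sidedness needed for Theorem \ref{Wall12Thm1} must be checked at the tangency point rather than supplied by a deformation family.
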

\begin{proof} The proof follows immediately from the proof of Theorem \ref{ZeroBarrier}. \end{proof}

Theorems \ref{NegBarrier} and \ref{ZeroBarrier} establish that there can be regions of spacetime which are inaccessible to quantum extremal surfaces anchored at some region $R$ on the boundary. In particular, when $R$ is the entire boundary, a quantum barrier defines a region of the bulk spacetime which is entirely inaccessible to any boundary-anchored quantum extremal surface, and---if our holographic entanglement proposal in section \ref{EE} is true---also cannot be probed using the entanglement entropy of the boundary region.  

In the classical case, a (codimension 2) extremal surface can itself be used to construct a barrier for other extremal surfaces, under the assumption of the null energy condition \cite{EngelhardtWall}.  This result can also be extended to quantum extremal surfaces.  Let $N^+$ be a null surface shot out from some ${\cal X}$ in the future-outward direction, and let $N^-$ be a null-surface shot out in the past-outward direction.  Their union $N = N^+ \cup N^-$ is a barrier surface.

Initially, as one moves along either of $N^\pm$, the generalized entropy is unchanging.  It will be conjectured in a future paper \cite{EntropicFocusing} that---at least for free fields---if the generalized entropy of a null surface is nonincreasing, it continues to be nonincreasing thereafter (and if it begins to decrease, it continues to do so).  This helps to explain the result (proven in \cite{Wall12} from the GSL) that null surfaces generated by quantum trapped surfaces must inevitably terminate at finite values of the affine parameter.  It is analogous to the classical case, where the area must focus because of the Raychaudhuri equation.

Assuming that this quantum focusing property applies, we see that $N$ has the requisite behavior for a barrier:
\begin{equation} \frac{\delta S_{\text{gen}}(N)}{\delta Q^{a}(p)} k^a \leq 0,\end{equation}
$k^a$ being an outwards pointing null vector.  If there were a family of quantum extremal surfaces which could be deformed across $N$, it would have to cross either $N^+$ or $N^-$ first.  Whichever is the case, by Theorems \ref{NegBarrier} and \ref{ZeroBarrier}, we get a contradiction.  So $N$ is a barrier to $N$-deformable quantum extremal surfaces anchored in Ext$(N)$.


%

\section{Discussion: Limits on Bulk Reconstruction}\label{discussion}

We have proven a number of results for quantum extremal surfaces, and proposed that their generalized entropy is likely to be the correct extension of the HRT and FLM proposals to quantum bulk spacetimes at subleading order in $\hbar$.  This corresponds to calculating corrections to the boundary entanglement entropy which are subleading in $N$. It may be possible to prove additional theorems about quantum extremal surfaces using the maximin approach \cite{WallMaximin}.

We will now discuss how these results fit into the bigger picture of bulk reconstruction.  By analogy to $D_R$ and $I_R$, we will define $\mathbf{D}_R$ and $\mathbf{I}_R$ as the regions which can be reconstructed/influenced by $R$ using the AdS/CFT dictionary.

An important open question in AdS/CFT is how much of the bulk can be reconstructed from a specified boundary region $R$ \cite{HubenyRangamani, Bousso:2012sj, Czech:2012bh}.  It seems that this reconstructable region ${\bf D}_R$ should be at least as large as the causal wedge $W_R$, and plausibly should extend all the way to the extremal surface $X_R$ \cite{Wall12}, at least for classical bulk geometries.  In the case of quantum spacetimes, it is natural to generalize this idea to the quantum extremal surface ${\cal X_R}$.

A related question is how large of a bulk region can be \emph{influenced} by operations in the region $R$.  This region of influence ${\bf I}_R$ must be at least as large as the domain of influence $I_R$, due to the ability to send signals into the bulk causally, but the Shenker--Stanford construction \cite{ShenkerStanford} shows that it must be larger than that.  

In fact---assuming that there are no operators in $D_R$ which commute with everything else in $D_R$---this region of bulk influence ${\bf I}_R$ must be at least as large as ${\bf D}_R$, since one can freely act on $D_R$ with unitary operators in order to change it into anything one likes.  However, ${\bf I}_R$ cannot extend into the complementary domain of reconstruction 
${\bf D}_{\bar{R}}$, or else unitaries in $R$ would be able to affect the spacelike separated region $\bar{R}$.  

For a classical pure bulk $X_R = X_{\bar{R}}$, so \emph{if} the reconstructable region extends up to $X_R$, then ${\bf D}_R$ and ${\bf D}_{\bar{R}}$ are also complementary.  That would imply that ${\bf I}_R = I^+({\bf D}_R) \cup I^-({\bf D}_R)$, so that the two concepts would be essentially the same, except that ${\bf I}_R$ would also include the region timelike to $X_R$.  But it is unclear whether the needed assumptions are in fact true.  These same reflections would also apply to the quantum extremal surface ${\cal X}$.

Our results provide evidence that quantum extremal surfaces ${\cal X}$ are closely related to the regions of bulk reconstruction and influence.  In section \ref{reach} we used the GSL to prove that causally propagating signals cannot go farther into the bulk than the closest quantum extremal surface ${\cal X}_R$.  To the extent that this propagating signal is a good measure of what unitary operators can do, this suggests that the domain of influence extends no farther than the quantum extremal surface ${\cal X}$.

Another probe of the spacetime is the entanglement entropy itself, which is a good example of a nonlocal bulk observable.  In section \ref{Barriers} we showed that in many spacetimes, there exist barrier surfaces which cannot be crossed by any other quantum extremal surface ${\cal X}_i$ anchored within some boundary region $D_R$, at least when ${\cal X}_i$ is continuously deformable to the region outside ${\cal X}_R$.

It is therefore natural to suppose that the reconstructable region ${\bf D}_R$ extends up to the outermost barrier itself.  Using some results from a future article \cite{EntropicFocusing}, we also argued that a null surface $N$ shot out from a quantum extremal surface ${\cal X}_R$ has the requisite properties to be a barrier to other quantum extremal surfaces.
 
Thus there are several indicators that ${\cal X}_R$ forms some type of boundary surface relevant to questions of bulk reconstruction and influence from $R$.  We should emphasize that we have only explored these questions using a limited subset of probes, so it is possible that new regions of the bulk would be accessible to a different kind of probe.

In the case where $R$ is taken to be the entire boundary, these results suggest that it may not be possible to reconstruct the region behind the extremal surface of the entire boundary ${\cal X}_\mathrm{CFT}$ from the dual CFT.  If that is really true, then it suggests that there might be so-called "superselection" information in the bulk which is not contained in the boundary CFT \cite{Freivogel:2005qh, Marolf:2008tx, Marolf:2012xe}.  Alternatively, there might be a firewall at ${\cal X}_\mathrm{CFT}$ which destroys in-falling observers and obviates the need for a region behind it \cite{AMPS, EngelhardtWall}.

Our results depend heavily on the fact that we are only interested in proving inequalities, since this allowed us to use Theorem \ref{Wall12Thm1} to neglect higher-curvature effects in the entropy functional (cf. section \ref{Premises}) as compared to the Bekenstein-Hawking area term.  If the higher-curvature terms were instead of the same order as the area term, many of our results would become much more difficult to prove; perhaps even false.

There is a hypothesis that in full quantum gravity, the entropy of a stationary black hole is in fact entirely due to entanglement entropy outside the horizon \cite{SusskindUglum,Frolov:1993ym, Barvinsky:1994jca,  Frolov:1998vs}.  That would mean that $S_\mathrm{gen}$ would be the fine-grained entropy of the exterior quantum gravity microstates.  If we apply this hypothesis to quantum extremal surfaces, we would conclude that ${\cal X}$ is really obtained by extremizing the entanglement entropy.  This would \emph{formally} allow the various proofs in this article to be applied using only the information-theoretical parts associated with $S_\mathrm{ent}$.  But this argument places us squarely in the non-perturbative quantum gravity regime, making usual metric concepts highly suspect.

Besides black hole thermodynamics, one of the few clues we have about nonperturbative quantum gravity is AdS/CFT.  The entanglement entropy $S(R)$ of a CFT region is just as well-defined at small $N$ and weak coupling as at large $N$ and strong coupling.  If AdS/CFT is valid in the former regime, there must be some concept in the bulk quantum gravity theory which is dual to $S(R)$.  It is interesting to ask whether the concept of a quantum extremal surface can be generalized to that context.  Perhaps progress can be made by assuming that some suitably quantum notion of holographic entanglement surfaces survives, and using that to explore the meaning of  \nolinebreak quantum \nolinebreak geo\nolinebreak\phantom{}metry.

\vskip 2cm
\centerline{\bf Acknowledgements}
\vskip 1cm
\noindent It is a pleasure to thank Don Marolf, Ahmed Almheiri, Gary Horowitz, Raphael Bousso, Zachary Fisher, William Kelly, Dalit Engelhardt, Mark Van Raamsdonk, and Mudassir Moosa for helpful discussions. This work is supported in part by the National Science Foundation under Grant No. PHY12-05500. NE is also supported by the National Science Foundation Graduate Research Fellowship under Grant No. DGE-1144085. AW is also supported by the Simons Foundation.


\end{document}